\DeclareMathOperator*{\argmin}{arg\,min}
\providecommand*{\diff}%
	{\@ifnextchar^{\DIfF}{\DIfF^{}}}
\def\DIfF^#1{%
	\mathop{\mathrm{\mathstrut d}}%
	\nolimits^{#1}\gobblespace}
\def\gobblespace{%
	\futurelet\diffarg\opspace}
\def\opspace{%
	\let\DiffSpace\!%
	\ifx\diffarg(%
		\let\DiffSpace\relax
	\else
		\ifx\diffarg[%
			\let\DiffSpace\relax
		\else
			\ifx\diffarg\{%
				\let\DiffSpace\relax
			\fi\fi\fi\DiffSpace}
\definecolor{oxford_blue}{RGB}{14,31,71}
\tikzset{%
  >={stealth[width=1mm,length=1mm]},
            base/.style = {rectangle, rounded corners, draw=black,
                           minimum width=1.5cm, minimum height=1cm,
                           text centered, font=\sffamily},
  			mass/.style = {base, text width=3cm, thick},
         process/.style = {base, on chain,
         				   minimum width=1.2cm, fill=white,
         				   text width=1.2cm},
            data/.style = {trapezium, draw, on chain,
            			   anchor=center,
            			   minimum width=1.5cm, minimum height=.75cm,
            			   trapezium left angle=75, trapezium right angle=105,
            			   text centered, font=\ttfamily,
            			   text width=1.6cm,
            			   fill=red!30},
           	  or/.style = {draw, circle, minimum size=0.05cm, fill=blue!30},
}
\newtheorem{definition}{Definition}
\newtheorem{proposition}{Proposition}
\newtheorem{lemma}{Lemma}
\begin{document}

\title{Detecting and repairing arbitrage in traded option prices
}



\author{Samuel N. Cohen \and Christoph Reisinger \and Sheng Wang \\
Mathematical Institute, University of Oxford \\
\texttt{ \{samuel.cohen, christoph.reisinger, sheng.wang\} }\\ \texttt{@maths.ox.ac.uk}
}


%

\maketitle

\begin{abstract}

Option price data are used as inputs for model calibration, risk-neutral density estimation and many other financial applications. The presence of arbitrage in option price data can lead to poor performance or even failure of these tasks, making pre-processing of the data to eliminate arbitrage necessary. Most attention in the relevant literature has been devoted to arbitrage-free smoothing and filtering (i.e. removing) of data. In contrast to smoothing, which typically changes nearly all data, or filtering, which truncates data, we propose to repair data by only necessary and minimal changes. We formulate the data repair as a linear programming (LP) problem, where the no-arbitrage relations are constraints, and the objective is to minimise prices' changes within their bid and ask price bounds. Through empirical studies, we show that the proposed arbitrage repair method gives sparse perturbations on data, and is fast when applied to real world large-scale problems due to the LP formulation. In addition, we show that removing arbitrage from prices data by our repair method can improve model calibration with enhanced robustness and reduced calibration error.

\end{abstract}

 
\newpage

\section{Introduction}

Price data of vanilla options are widely used in various financial applications such as calibrating models for pricing and hedging, and computing risk-neutral densities (RND) of the underlying. The presence of arbitrage in option price data can lead to poor or even failed model calibration, as well as erroneous RND estimation. Derivative pricing models in nearly all applications are constructed to be arbitrage-free, as it is economically meaningless to have a model that has the potential to make risk-free profits. Exact calibration is impossible for any arbitrage-free model or RND function with any size of input data if the price data contain arbitrage. For example, calibration of the local volatility model of Dupire \cite{Dupire1994} and Derman and Kani \cite{Derman1994} would fail given arbitrageable data. It is also not possible to have any arbitrage-free interpolation such as Kahale \cite{Kahale2003} and Wang et al. \cite{Wang2004}, because the data to be interpolated are not arbitrage-free from the beginning. Though inexact calibration methods are available for many models, it seems natural to expect that removing arbitrage from input data can improve calibration of arbitrage-free models, such as enhancing robustness or reducing calibration error.

Therefore, it is important to remove arbitrage (if present) from option price data. Most attention in the relevant literature has been devoted to the \textit{smoothing} and \textit{filtering} of data. Notable works on smoothing include A\"{i}t-Sahalia and Duarte \cite{AitSahalia2003}, Fengler et al. \cite{Fengler2009} \cite{Fengler2012} \cite{Fengler2015}, Gatheral and Jacquier \cite{Gatheral2014}, and Lim \cite{lim2020improved}. In fact, the calibration of many pricing models, such as stochastic volatility models, is essentially arbitrage-free smoothing. Arbitrage-free data is only a byproduct of smoothing, since the main goal of smoothing is to produce a $C^{1,2}$ call price function $(T,K) \mapsto C(T,K)$ (or equivalently implied volatility function $(T,K) \mapsto \sigma_\text{imp}(T,K)$). For smoothing, usually an $\ell^2$-norm optimisation is used when searching over polynomial, spline or kernel parameters that produce values as close to the given price data as possible. This method leads to changes for nearly \textit{all} data. Though liquidity considerations can be included by adding weights in the optimisation, it remains unclear what should be an effective way to set weight values for different options. The filtering of data refers to simply removing suspiciously low-quality price data according to criteria in terms of moneyness, expiry, trading volume, intra-day activity, etc. A good survey of popular empirical filtering criteria can be found in Ivanovas \cite{Ivanovas2015} and Meier \cite{Meier2015}. Filtering can be quite subjective, can cause information loss, and might not even be feasible as many criteria are based on order-book level data, which are not always available (for example, in OTC markets).

In contrast to smoothing, which typically changes nearly all data, or filtering, which truncates data, we propose to \textit{repair} data in the sense that only necessary and minimal changes are made to the given data in the presence of arbitrage. If arbitrages in data are mainly consequences of infrequent price updates of illiquid options rather than noncompetitive market, it is better to only perturb as few data points as possible. In addition, when making changes, we use bid and ask prices as soft bounds such that liquidity profiles of different options are considered in an objective way. Bid-ask spread is a measure of liquidity, i.e. the narrower the spread, the easier a market order can be matched and executed. Since the ``fair'' price could lie anywhere within the bid-ask price bounds, the width of the bid-ask spread represents the degree of certainty in the market prices. Empirically, deep out-of-the-money (OTM) and in-the-money (ITM) options are thinly traded with wide bid-ask spreads, leading to less trustworthy price data compared with more liquid options. We therefore formulate the data repair as a constrained optimisation problem, where the no-arbitrage relations are written as constraints, and the objective is to minimise price changes within soft bounds. By carefully choosing the objective function, we can rewrite the formulation as a linear programming (LP) problem, so that we can take advantage of efficient solution techniques and software for large-scale LPs.

Our method is to repair single-price data. At any moment during the trading day, each tradable asset has multiple prices, i.e. bid price and ask price. However, most applications require single-price inputs. There is a need to construct some ``fair'' reference price from the market-quoted multiple prices. Examples of a reference price are the mid-price, the quantity-weighted price, the last trade price or the micro-price by Stoikov \cite{Stoikov2018}. In this article, we do not discuss the construction of reference price, and use the mid-price by default, however other reference prices could easily be considered.

We envisage further applications of our methodology in repairing data generated by models which do not themselves rule out arbitrage. Included in this class are prices predicted by deep learning methods, which have gained substantial popularity recently, as documented by the survey paper by Ruf and Wang \cite{ruf2019neural}. Typically, there is no guarantee for arbitrage-free predicted option prices even if the training set is arbitrage free; see also a more detailed discussion of this point in the introduction of Dixon, Cr\'{e}pey and Chataigner \cite{dixon2020deep}, which goes on to use the local volatility code book for arbitrage free vanilla prices as a means of guaranteeing arbitrage-free interpolation of prices. The arbitrage repair method from our paper can provide a simple post-processing step of potentially arbitrageable learned prices. By repairing a discrete set of input prices directly without extra assumptions, using linear constraints only, the method distinguishes itself by versatility, transparency, and speed, making it particularly well-suited to online computations.

The rest of the paper is structured as follows.  We derive a set of empirically verifiable model-independent, static arbitrage constraints in Section \ref{sec:arbitrage}. Our derivation is mainly based on Carr, G\'{e}man, Madan and Yor \cite{CGMY2003} \cite{Carr2005}, Davis and Hobson \cite{davis2007},  and Cousot \cite{cousot2007} \cite{cousot2004}. In Section \ref{sec:arbitrage_repair}, we formulate data repair as a constrained LP problem, and the design of the objective function is carefully discussed\footnote{Our implementation of this algorithm in Python is available in the repository \url{https://github.com/vicaws/arbitragerepair}.}. Finally in Section \ref{sec:numerics}, we apply our arbitrage repair method to FX option data to justify why arbitrage repair is needed for real data, and demonstrate how our method performs empirically on various metrics, especially on the improvement of model calibration. We also show an example of how we can use our repair method for identifying the formation and disappearance of executable arbitrage in intra-day price data.

\section{Arbitrage constraints}
\label{sec:arbitrage}

We consider a finite collection of traded European call options\footnote{We focus on European style vanilla options in this study. Specifically, we only consider call options, since the static arbitrage constraint between call and put options is the put-call parity, which can be easily incorporated in our approach. The framework of our arbitrage repair method is applicable to a mixture of a wider range of options, as long as their arbitrage constraints can be defined by feasible linear inequalities of prices.} written on the same asset. These options can have arbitrary expiry and strike parameters rather than a rectangular grid of parameters, a restrictive prerequisite for many arbitrage detection \cite{Carr2005} and spline-type smoothing methods \cite{Fengler2009} \cite{Kahale2003} to work. In practice, it is uncommon to have price data on a rectangular grid, see, e.g. Figure \ref{fig:optionLattice}.

Consider $N$ European call options that have expiries $0 < T_1 < T_2 < \cdots < T_m$. For a given expiry $T_i$, available strikes are $0 < K_1^i < K_2^i < \cdots < K_{n_i}^i$. The $(i,j)$-th option has reference price $C^i_{j}$ at present time $0$, and terminal payoff $(S_{T_i} - K^i_j)^+$, where $S_t$ denotes the price of the underlying asset at time $t$. Hence $N = \sum_{i=1}^m n_i$. Denote $\mathcal{T}^e = \{T_i\}_{1 \leq i \leq m} $ and $\mathcal{P}^{T,K} = \{(T_i, K^i_j)\}_{1 \leq j \leq n_i, 1 \leq i \leq m}$. In Figure \ref{fig:optionLattice}, we show how $(T,K)$ are distributing for traded call options on a typical trading day. A detailed description of the data used can be seen in Section \ref{sec:numerics}.
\begin{figure}[!h]
\centering
\includegraphics[scale=.59]{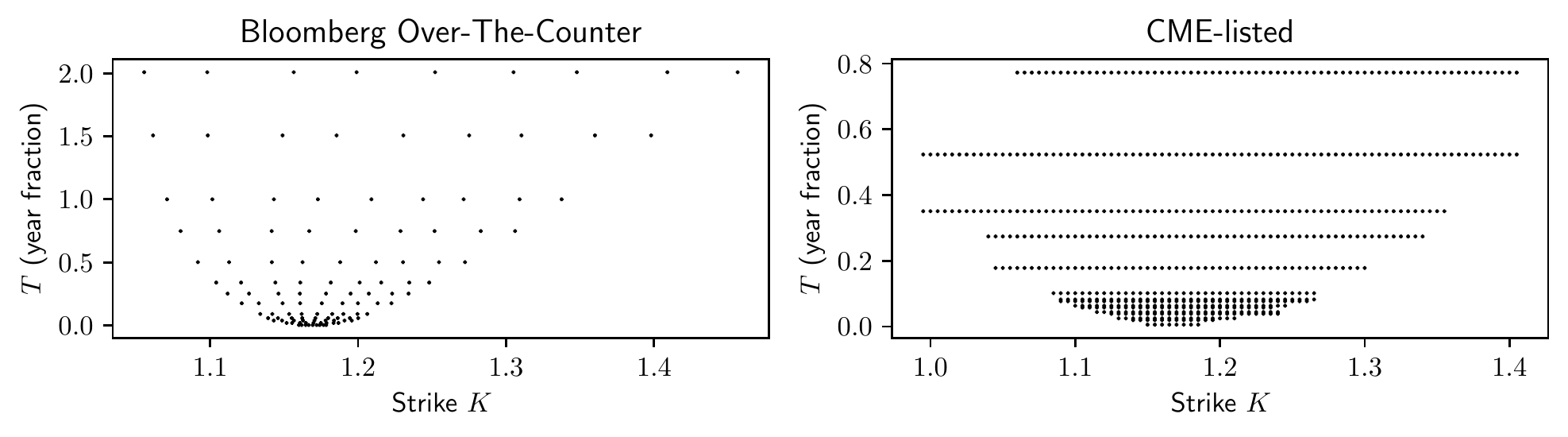}
\caption{Distributions of $(K,T)$ for traded EURUSD call options in the OTC market (Bloomberg data) and at the CME market, observed as of 31st May, 2018.}
\label{fig:optionLattice}
\end{figure}

\subsection{Assumptions}

The arbitrage constraints that price data should satisfy are derived under a frictionless market assumption. As a consequence, when the price data break these constraints, it may not be possible in practice to exploit the apparent arbitrage, given practical market barriers and transaction costs. However, the assumption that prices should be arbitrage free is justified by the fact that the single-price data are not executable prices in the market, but are designed to be reference or benchmark prices for tradable assets, which are useful inputs to a variety of models. Nevertheless, we will use bid and ask prices as soft bounds for guiding our arbitrage repair mechanism.

We allow for non-zero but deterministic interest and dividends\footnote{When applying our method to other asset classes, dividends of stock shares are comparable to foreign currency interest rates for FX rates, or convenience yields for commodities.}. At present time 0, we use $D(T)$ to denote the market discount factor for time $T$, and $\Gamma(T)$ to denote the number of shares which will be owned by time $T$ if dividend income is invested in shares. Then there is a model-independent, arbitrage-free forward price, $F(T) = S_0 / (\Gamma(T) D(T))$, for delivery of the asset at $T$.

We assume that zero-coupon bonds and forward contracts on the risky asset, with the same expiries as the options, are traded in the market. In addition, they are sufficiently liquid that we can neglect their bid-ask spreads (e.g. usually one or two ticks). Therefore, we observe market discount factors $D_i:=D(T_i)$ and forward prices $F_i:=F(T_i)$ for $1 \leq i \leq m$. However, when the underlying (spot or forward) trades at a sufficiently large bid-ask spread, then any arbitrage strategy can become impossible (see the discussion by Gerhold and G\"{u}l\"{u}m \cite{Gerhold2020}).

\subsection{Static arbitrage}

Arbitrage refers to a costless trading strategy that has a positive probability of earning risk-free profit. A \textit{static arbitrage} is an arbitrage exploitable by fixed positions in options and the underlying stock at initial time, while the position of the underlying stock can be modified at a finite number of trading times in the future. Any other arbitrage is called \textit{dynamic arbitrage}. As an example of a static arbitrage, it must hold the condition that $C^1_1 \geq C^1_2$ for $K^1_1 < K^1_2$, otherwise by going long one $(T_1,K^1_1)$ option and short one $(T_1,K^1_2)$ option, we make immediate profit of $C^1_2 - C^1_1$ with non-negative terminal payoff. An example of dynamic arbitrage is a continuously delta-hedged short position on an over-priced option in the perfect Black--Scholes world.

Dynamic arbitrage relies on dynamics and path properties of the tradable assets. From the data repair perspective, we should minimise model dependence, because the repaired data are to be used in more generic applications. Hence, data should only be adjusted by model-independent constraints, so we restrict ourselves to static arbitrage in which no dynamics need to be modelled. Static arbitrage constraints establish the prerequisites that the price data have to satisfy at time zero for admitting a dynamically arbitrage-free model.

A \textit{model} $\mathbb{M}$ is a filtered probability space $(\Omega$, $\mathscr{F}$, $\{\mathscr{F}_t\}_{t \in \mathcal{T}}$, $\mathbb{P})$, that carries an adapted price process $\{(S_t, \mathbf{C}_t)\}_{t \in \mathcal{T}}$, where $\mathbf{C}_t$ gives the prices of the $N$ options at time $t$, and we observe $\mathbf{C}_0$. Here $\mathcal{T}$ denotes the set of times at which the asset can be traded so that $0 \in \mathcal{T}$ and $\mathcal{T}^e \subset \mathcal{T}$, and $\mathscr{F}_0 = \{\Omega, \emptyset\}$ augmented with all null sets of $\mathscr{F}_{T_m}$. 

The First Fundamental Theorem of Asset Pricing (FFTAP) establishes an equivalence relation between no-arbitrage (static and dynamic) and the existence of an equivalent martingale measure (EMM). After the landmark work of Harrison and Kreps \cite{harrison1979}, there are various versions of the FFTAP and extensions of the no-arbitrage concept (e.g. no free lunch by Kreps \cite{Kreps1981}, no free lunch with vanishing risk by Delbaen and Schachermayer \cite{Delbaen1994}). In this article, we work with a simplified version of FFTAP as follows. Given a model $\mathbb{M}$, there is no arbitrage if and only if $\exists \mathbb{Q} \sim \mathbb{P}$, such that
\begin{equation}
    \forall (T,K) \in \mathcal{P}^{T,K} \cup ( \mathcal{T}^e \times \{0\} ), ~D(t) C_t(T,K) =  D(s) \mathbb{E}^\mathbb{Q} \left[ C_s(T,K) | \mathscr{F}_t \right]
\label{eq:martingale_asset}
\end{equation}
for all $t < s \leq T$ where $t,s \in \mathcal{T}$. No static arbitrage corresponds to a much smaller set of conditions, since the path dynamics governed by $\mathbb{Q}$ no longer matter. As discussed by Carr, G\'{e}man, Madan and Yor \cite{CGMY2003}, \cite{Carr2005} and Davis \cite{davis2007}, static arbitrage is present if no $\mathbb{Q}$ exists such that $C_0(T,K) =  D(T) \mathbb{E}^\mathbb{Q} [C_T(T,K) | \mathscr{F}_0]$. Therefore, static arbitrage constraints are consequences of relations between terminal payoffs, projected to the present time.

\subsection{Shape constraints of the call price surface}

Let us define $M_{T_i} = S_{T_i} / F_i$, $k^i_j = K^i_j/F_i$, $c^i_j = C^i_j / (D_i F_i)$, for all $i,j$. To have no static arbitrage, there must exist $\mathbb{Q}$ such that
\begin{equation*}
c^i_j = \mathbb{E}^\mathbb{Q} \left[ \left. \left( \frac{S_{T_i}}{F_i} - \frac{K^i_j}{F_i} \right)^+ \right| \mathscr{F}_0 \right] = \mathbb{E}^\mathbb{Q} \left[ \left. \left( M_{T_i} - k^i_j \right)^+ \right| \mathscr{F}_0 \right],  \quad \forall i,j.
\label{eq:no_arbitrage}
\end{equation*}
We will work on ``normalised'' quantities $M$, $k$, $c$ in the rest of this section. We define the \textit{normalised call function} $c(T,k)$ as
\begin{equation}
c(T, k) := \mathbb{E}^\mathbb{Q} \left[ \left. \left( M_{T} - k \right)^+ \right| \mathscr{F}_0 \right], \text{ where } T \in \mathbb{R}_{>0}, ~k \in \mathbb{R}_{\geq 0}. 
\label{eq:call_function}
\end{equation}

Given the specific structure in (\ref{eq:call_function}), a probability measure $\mathbb{Q}$ exists only when the call function satisfies some shape constraints. For arbitrary but fixed $T$, using Breeden and Litzenberger's analysis \cite{breeden1978}, the marginal measure $\mathbb{Q}_T := \mathbb{Q}(\cdot|\mathscr{F}_T)$ exists if
\begin{equation*}
    \forall~ k_3 > k_2 > k_1 \geq 0, ~ -1 \leq \frac{c(T,k_2)-c(T,k_1)}{k_2-k_1} \leq \frac{c(T,k_3)-c(T,k_2)}{k_3-k_2} \leq 0.
\end{equation*}
If a family of marginal measures $\{\mathbb{Q}_T\}_{T \in \mathcal{T}^e}$ on $(\mathbb{R}, \mathcal{B}(\mathbb{R}))$ exists with time-independent mean, and $\mathbb{Q}_{T_1} \geq_\text{cx} \mathbb{Q}_{T_2}$ whenever $T_1 > T_2$, then, by Kellerer's theorem \cite{kellerer1972}, there exists a Markov martingale measure with these marginals. Here we write $\mathbb{Q}_{T_1} \geq_\text{cx} \mathbb{Q}_{T_2}$ if $\int_\mathbb{R} \phi \diff \mathbb{Q}_{T_1} \geq \int_\mathbb{R} \phi \diff \mathbb{Q}_{T_2}$ for each convex function $\phi:\mathbb{R} \rightarrow \mathbb{R}$, and we say $\{\mathbb{Q}_T\}_{T \in \mathcal{T}^e}$ is Non-Decreasing in Convex Order (NDCO). The convex order can be equivalently characterised in terms of the call function \cite{shaked2007}:
\begin{equation*}
\mathbb{Q}_{T_1} \geq_\text{cx} \mathbb{Q}_{T_2} \Longleftrightarrow
\begin{dcases}
& \mathbb{Q}_{T_i} \text{ and } \mathbb{Q}_{T_j} \text{ have equal means}; \\
& \int_\mathbb{R} (x-k)^+ ~\text{d}\mathbb{Q}_{T_1} \geq \int_\mathbb{R} (x-k)^+ ~\text{d}\mathbb{Q}_{T_2} \quad \forall x\in \mathbb{R}.
\end{dcases}
\end{equation*}
Given that $\mathbb{E}^{\mathbb{Q}_T} [M_U] = \mathbb{E}^{\mathbb{Q}_T} [S_U / F_T(U)] = 1$ is time-independent for any $T < U $ where $T,U \in \mathcal{T}^e$, it is then sufficient to conclude that $\{\mathbb{Q}_T\}_{T \in \mathcal{T}^e}$ is NDCO if $c(T,\cdot) \leq c(U,\cdot)$. Also note that $\lim_{k \downarrow 0} c(T,k) = \mathbb{E}^\mathbb{Q} [M_T | \mathscr{F}_0] = M_0 = 1$ for any $T$, and by monotonicity we have $0 \leq c(T,k) \leq 1$.
Hence, if we define a set of functions $s(x,y): X \times Y \rightarrow \mathbb{R}$, where $X,Y \subseteq \mathbb{R}_{\geq 0}$ are compact sets, by
\begin{equation}
\begin{aligned}
    \mathcal{S}(X \times Y) = \Bigg\{ & (x,y) \mapsto s(x, y): \forall ~x_1 < x_2 \in X, ~ y_1 < y_2 < y_3 \in Y, \\
     & 0 \leq s \leq 1, ~s(x_1, \cdot) \leq s(x_2, \cdot), \\
     & -1 \leq \frac{s(\cdot,y_2)-s(\cdot,y_1)}{y_2-y_1} \leq \frac{s(\cdot,y_3)-s(\cdot,y_2)}{y_3-y_2} \leq 0 \Bigg\},
\end{aligned}
\label{eq:surface_no_arbitrage}
\end{equation}
then no arbitrage can be constructed on the static surface $(T,k) \mapsto c(T,k)$ if $c \in \mathcal{S}(\mathbb{R}_{> 0} \times \mathbb{R}_{\geq 0})$. Consequently, no static arbitrage can be constructed from the finite collection of prices if 
\begin{equation}
\exists c \in \mathcal{S} \left(\mathcal{T}^e \times [0, \max_{i,j} k^i_j] \right), \text{ s.t. } \forall (T_i, k^i_j) \in \mathcal{P}^{T,k}, ~c(T_i, k^i_j) = c^i_j,
\label{eq:call_shape_constraints}
\end{equation}
where $\mathcal{P}^{T,k} = \{(T_i, k^i_j)\}_{1 \leq j \leq n_i, 1 \leq i \leq m}$.

Condition (\ref{eq:call_shape_constraints}) can be characterised by practically verifiable constraints of prices $\mathbf{c}$. We slightly revise Cousot's construction (Definition 2.1 -- 2.3 in \cite{cousot2004}). We augment the given price data with the price that corresponds to a call struck at $0$ for each expiry. This means $\forall i \in \{1,\cdots,m\}$ we add $K_0^i = 0$ and $C_0^i = F_i$, or equivalently $k^i_0 = 0$ and $c^i_0 = 1$. This augmentation is necessary to check arbitrage relationships between call options and forwards. Define, for any $k^{i_1}_{j_1} > k^{i_2}_{j_2}$, where $1 \leq i_1, i_2 \leq m$, $0 \leq j_1 \leq n_{i_1}$, and $0 \leq j_2 \leq n_{i_2}$,
\begin{equation}
\beta(i_1, j_1; i_2, j_2) := \frac{c^{i_1}_{j_1}-c^{i_2}_{j_2}}{k^{i_1}_{j_1}-k^{i_2}_{j_2}},
\label{eq:beta}
\end{equation}
which can be viewed as the slope of the straight line passing through the two points $(k^{i_1}_{j_1}, c^{i_1}_{j_1})$ and $(k^{i_2}_{j_2}, c^{i_2}_{j_2})$, if we plot all prices on the $(k,c)$ plane. We will employ $\beta(\cdot)$ to define the price of some \textit{test strategies}.

\begin{definition}
A test \textit{spread} strategy is defined $\forall 1 \leq i_1 \leq i_2 \leq m$, and $\forall 0 \leq j_1 \leq n_{i_1}$, $0 \leq j_2 \leq n_{i_2}$ such that $k^{i_1}_{j_1} \geq k^{i_2}_{j_2}$, by 
\begin{equation*}
\text{S}^{i_1, i_2}_{j_1, j_2} = 
\begin{cases}
- \beta(i_1, j_1; i_2, j_2) & $if$ ~k^{i_1}_{j_1} > k^{i_2}_{j_2}, \\
c^{i_2}_{j_2} - c^{i_1}_{j_1} & $if$ ~k^{i_1}_{j_1} = k^{i_2}_{j_2}.
\end{cases}
\end{equation*}
In particular, there are three types of test spread strategies:
\begin{enumerate}[leftmargin=*, label=(\arabic*)]
\item \textit{Vertical spread}: $\text{VS}^i_{j_1, j_2} = \text{S}^{i, i}_{j_1, j_2}$ with $k^{i}_{j_1} > k^{i}_{j_2}$.
\item \textit{Calendar spread}: $\text{CS}^{i_1,i_2}_{j} = \text{S}^{i_1, i_2}_{j_1, j_2}$ with $k^{i_1}_{j_1} = k^{i_2}_{j_2}$ and $i_1 < i_2$.
\item \textit{Calendar vertical spread}: $\text{CVS}^{i_1, i_2}_{j_1, j_2} = \text{S}^{i_1, i_2}_{j_1, j_2}$ with $k^{i_1}_{j_1} > k^{i_2}_{j_2}$ and $i_1 < i_2$.
\end{enumerate}
\end{definition}

\begin{definition}
A test \textit{butterfly} strategy is defined $\forall i, i_1, i_2 \in [1,m]$ s.t. $i \leq i_1$ and $i \leq i_2$, $\forall j \in [0, n_i], ~j_1 \in [0, n_{i_1}], ~j_2 \in [0, n_{i_2}]$ such that $k_{j_1}^{i_1} < k_{j}^{i} < k_{j_2}^{i_2}$, by
\begin{equation*}
\text{B}^{i,i_1,i_2}_{j,j_1,j_2} = - \beta(i,j;i_1,j_1) + \beta(i_2,j_2;i,j).
\end{equation*}
In particular, there are two types of test butterfly strategies:
\begin{enumerate}[leftmargin=*, label=(\arabic*)]
\item \textit{Vertical butterfly}: $\text{VB}^{i}_{j,j_1,j_2} = \text{B}^{i,i,i}_{j,j_1,j_2}$.
\item \textit{Calendar butterfly}: $\text{CB}^{i,i_1,i_2}_{j,j_1,j_2} = \text{B}^{i,i_1,i_2}_{j,j_1,j_2}$ where $i,~i_1$ and $i_2$ are not all equal.
\end{enumerate}
\end{definition}

Based on these definitions of test strategies, we restate Cousot's constraints for no-arbitrage in the following proposition.

\begin{proposition}[Cousot \cite{cousot2007}, \cite{cousot2004}]
\label{prop:cousot}
All test strategies are non-negative, and all test vertical spreads are not greater than $1$, if and only if there exist $m$ risk-neutral measures $\{\mathbb{Q}_{T_i}\}_{1 \leq i \leq m}$ corresponding to all option expiries, that are NDCO. In addition, all their means are equal to $M_0=1$.
\end{proposition}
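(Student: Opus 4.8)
The plan is to prove the equivalence by going through the shape characterisation (\ref{eq:surface_no_arbitrage})--(\ref{eq:call_shape_constraints}) of the preceding subsection. Concretely, I would show that the two conditions in the statement are each equivalent to the existence of a function $c \in \mathcal{S}(\mathcal{T}^e \times [0, \max_{i,j} k^i_j])$ with $c(T_i, k^i_j) = c^i_j$ for all $i,j$; once such a $c$ is available, Breeden and Litzenberger's analysis \cite{breeden1978} produces, one expiry slice at a time, risk-neutral marginals $\mathbb{Q}_{T_i}$ on $\mathbb{R}_{\geq 0}$ with $\mathbb{E}^{\mathbb{Q}_{T_i}}[M_{T_i}] = \lim_{k \downarrow 0} c(T_i,k) = c(T_i,0) = c^i_0 = 1$, so all means equal $M_0 = 1$, and the monotonicity $c(T_i,\cdot) \leq c(T_{i'},\cdot)$ built into $\mathcal{S}$ is, in the presence of equal means, exactly the characterisation of $\mathbb{Q}_{T_i} \leq_\text{cx} \mathbb{Q}_{T_{i'}}$ recalled above, i.e. NDCO. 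So the entire content reduces to matching the test-strategy inequalities with membership in $\mathcal{S}$.

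For the direction that assumes the NDCO family $\{\mathbb{Q}_{T_i}\}$ with common mean $1$, set $c(T_i,k) := \mathbb{E}^{\mathbb{Q}_{T_i}}[(M_{T_i}-k)^+]$, so that $c(T_i,k^i_j)=c^i_j$ and $c(T_i,0)=1=c^i_0$. For each fixed $m \geq 0$ the map $k \mapsto (m-k)^+$ is non-negative, non-increasing, convex and $1$-Lipschitz, and these properties pass to $c(T_i,\cdot)$ under the expectation; hence the negated difference quotients of $c(T_i,\cdot)$ lie in $[0,1]$ and are non-decreasing. Non-negativity of every test vertical spread together with its upper bound by $1$ is then just the statement ``all slopes of $c(T_i,\cdot)$ lie in $[-1,0]$'', and non-negativity of every test vertical butterfly is convexity of $c(T_i,\cdot)$. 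For the calendar strategies one uses that NDCO with equal means gives $c(T_i,\cdot) \leq c(T_{i'},\cdot)$ pointwise whenever $i < i'$: a test calendar spread is then manifestly non-negative, a test calendar vertical spread follows from this together with monotonicity in the strike, and for a test calendar butterfly $\text{B}^{i,i_1,i_2}_{j,j_1,j_2}$ (where $i \leq i_1,i_2$, so $T_i$ is the earliest of the three expiries) one replaces the outer values $c^{i_1}_{j_1},c^{i_2}_{j_2}$ by the not-larger $c(T_i,k^{i_1}_{j_1}),c(T_i,k^{i_2}_{j_2})$ and bounds the result below by the corresponding vertical butterfly of $c(T_i,\cdot)$, which is $\geq 0$ by convexity.

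For the converse I would build $c$ one expiry slice at a time. Non-negativity of the test vertical butterflies says the augmented points $\{(k^i_j,c^i_j)\}_{j=0}^{n_i}$ lie in convex position, so their piecewise-linear interpolant $\ell_i$ is convex; non-negativity of the test vertical spreads and their upper bound by $1$ force every slope of $\ell_i$ into $[-1,0]$; and with $\ell_i(0)=c^i_0=1$ this keeps $\ell_i$ inside $[0,1]$. One then extends $\ell_i$ beyond its largest strike to all of $[0,\max_{i,j}k^i_j]$ in a convex, non-increasing, $1$-Lipschitz, $[0,1]$-valued fashion, with care taken so that the tail vanishes and the slice defines a genuine probability distribution under Breeden--Litzenberger. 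The remaining, and genuinely delicate, point is to choose the slices so that $\ell_i \leq \ell_{i'}$ on the whole interval whenever $i<i'$, not merely at strikes common to the two expiries. Here the calendar butterflies supply the comparison at a strike $k^i_j$ of the earlier expiry: writing $k^i_j$ as the appropriate convex combination of the two adjacent data strikes of $T_{i'}$ and invoking non-negativity of $\text{CB}^{i,i',i'}_{j,\cdot,\cdot}$ yields $c^i_j \leq \ell_{i'}(k^i_j)$; symmetrically, the calendar vertical spreads (and, at common strikes, the calendar spreads) govern the comparison at the strikes of $T_{i'}$ and in the extension region. Since $\ell_i$ and $\ell_{i'}$ are piecewise linear, it is enough to verify the inequality at the finitely many breakpoints, but making all of these hold at once is precisely where the particular interleaving of the per-slice constructions is needed; this is essentially Cousot's construction \cite{cousot2004}, which I would follow. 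I expect this cross-expiry monotonicity step (together with the attendant bookkeeping at the largest strike) to be the main obstacle, the forward direction and the per-slice part being routine.
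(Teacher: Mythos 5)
The paper does not actually prove this proposition: it is a restatement of Cousot's result, with both directions delegated to \cite{cousot2007}, \cite{cousot2004} (the text immediately following it points to Cousot's Appendix A for the necessity half), so there is no in-paper argument to compare yours against. Your forward direction is correct and essentially complete: the properties of $k \mapsto (m-k)^+$ pass through the expectation to give each slice $c(T_i,\cdot)$ convex, non-increasing and $1$-Lipschitz with $c(T_i,0)=1$, equal means plus convex order give pointwise domination of the slices, and replacing the outer points of a calendar butterfly by the (smaller) values of the earliest slice strengthens the inequality to a vertical butterfly of that slice.

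The converse, however, has a genuine gap exactly where you locate the difficulty, and the patch you propose does not work as stated. If $\ell_i$ is the piecewise-linear interpolant of the expiry-$T_i$ data, then at a breakpoint $k^{i'}_{j'} \in (k^i_j, k^i_{j+1})$ of a later slice the calendar vertical spread only yields $c^i_{j+1} \leq c^{i'}_{j'}$, which controls the right endpoint of that segment of $\ell_i$, not its interpolated value; and since every test calendar butterfly has the \emph{earliest} expiry in the middle position, no test strategy forces $(k^{i'}_{j'}, c^{i'}_{j'})$ to lie above the chord of the earlier expiry. Concretely, expiry-$1$ prices $1, 0.5, 0.4$ at strikes $0,1,2$ and expiry-$2$ prices $1, 0.42$ at strikes $0, 1.5$ satisfy every test-strategy inequality, yet $\ell_1(1.5)=0.45 > 0.42$, so $\ell_1 \not\leq \ell_2$ and your interpolants do not give NDCO marginals (although valid marginals exist, e.g.\ by taking $c(T_1,1.5)=0.41$). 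The point is that the $T_i$-marginal must \emph{not} be the PL interpolant: one has to use the freedom in choosing $c(T_i,\cdot)$ between its own data strikes to push it below the later expiries' data points, which is precisely the recursive, interleaved construction in Cousot \cite{cousot2004} that you defer to. So the obstacle you flag is not mere bookkeeping at the largest strike; it invalidates the per-slice interpolation itself, and the converse as sketched would fail without importing Cousot's construction wholesale.
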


Together with Kellerer's theorem \cite{kellerer1972}, Proposition \ref{prop:cousot} gives sufficient conditions for the existence of a $\mathbb{Q}$-martingale (thus no static arbitrage), in terms of constraints on prices of the test strategies. Those constraints are also necessary for no static arbitrage if semi-static strategies are allowed to exploit arbitrage opportunities, as proved by Cousot in Appendix A of \cite{cousot2007}.

\subsection{Constraints reduction}
\label{sec:constraints_reduction}

Cousot's constraints contain redundancies. For instance, if two vertical spreads $\text{VS}^i_{j_2,j_1}$ and $\text{VS}^i_{j_3,j_2}$ (where $k^i_{j_1} < k^i_{j_2} < k^i_{j_3}$) are non-negative, then $\text{VS}^i_{j_3,j_1} \geq 0$ holds automatically. We will reduce the number of constraints from $\mathcal{O}(N^3)$ to $\mathcal{O}(m^2 N)$ by localisation on the surface. Localisation can successfully reduce the amount of constraints because the shape constraints specified in (\ref{eq:surface_no_arbitrage}) include only boundedness, positivity, monotonicity and convexity, which are all local properties. The reduced set of constraints is listed in Table \ref{tab:reduced_constraints}, where the (order of the) number of constraints in each category is also indicated.

\begin{table}[!h]
\centering
\footnotesize
\begin{tabular}{llc}
\toprule
\textbf{Category} & \textbf{Constraints} & \textbf{Number} \\ \midrule

\hypertarget{c1}{C1} Outright & $\forall i \in [1,m], ~ c_{n_i}^i \geq 0$ & $m$ \\ \midrule

\multirow{2}{*}{\hypertarget{c2}{C2} Vertical spread} & $\forall i \in [1,m], ~ j \in [1,n_i],$ & \multirow{2}{*}{$N + m$} \\
 & $\text{VS}^i_{j,j-1} \geq 0 ~\text{and } \text{VS}^i_{1,0} \leq 1$ & \\ \midrule

\hypertarget{c3}{C3} Vertical butterfly & $\forall i \in [1,m], ~ j \in [1,n_i-1], ~\text{VB}^i_{j, j-1, j+1} \geq 0$ & $N - m$ \\ \midrule

\multirow{2}{*}{\hypertarget{c4}{C4} Calendar spread} & $\forall 1 \leq i_1 < i_2 \leq m, ~j_1 \in [0,n_{i_1}], ~j_2 \in [0,n_{i_2}],$ & \multirow{2}{*}{$\mathcal{O} (m N)$} \\
 & $\text{CS}^{i_1,i_2}_{j_1,j_2} \geq 0$ & \\ \midrule

\multirow{3}{*}{\begin{tabular}[c]{@{}l@{}}\hypertarget{c5}{C5} Calendar vertical \\ spread\end{tabular}} & $\forall i^* \in [1,m], ~ j^* \in [1,n_{i^*}],$ & \multirow{3}{*}{$\mathcal{O} (m N)$} \\
 & $\text{define } \mathcal{I}:=\{i,j: T_i > T_{i^*}, ~k^{i^*}_{j^*-1} < k^i_j < k^{i^*}_{j^*}\},$ & \\
 & $\text{then } \forall i,j \in \mathcal{I}, ~\text{CVS}^{i^*, i}_{j^*, j} \geq 0$ & \\ \midrule

\multirow{9}{*}{\begin{tabular}[c]{@{}l@{}}\hypertarget{c61}{C6.1} \\ Calendar butterfly I \\ (Absolute location \\ ~convexity)\end{tabular}} & $\forall i^* \in [1,m], ~ j^* \in [1,n_{i^*}-1],$ & \multirow{9}{*}{$\mathcal{O} (m^2 N)$} \\
 & $\text{define } \mathcal{I}:=\{i,j: T_i > T_{i^*}, ~k^{i^*}_{j^*-1} < k^i_j < k^{i^*}_{j^*}\},$ & \\
 & $\text{then } \forall i,j \in \mathcal{I}, ~\text{CB}^{i^*, i, i^*}_{j^*, j, j^*+1} \geq 0;$ & \\ \cmidrule(lr){2-2}
 & $\forall i^* \in [1,m], ~ j^* \in [2,n_{i^*}],$ & \\
 & $\text{define } \mathcal{I}:=\{i,j: T_i > T_{i^*}, ~k^{i^*}_{j^*-1} < k^i_j < k^{i^*}_{j^*}\},$ & \\
 & $\text{then } \forall i,j \in \mathcal{I}, ~\text{CB}^{i^*, i^*, i}_{j^*-1,j^*-2,j} \geq 0;$ & \\ \cmidrule(lr){2-2}
 & $\forall i^* \in [1,m],$ & \\
 & $\text{define } \mathcal{I}:=\{i,j: T_i > T_{i^*}, ~ k^i_j > k^{i^*}_{n_{i^*}} \},$ & \\
 & $\text{ then } \forall i,j \in \mathcal{I}, ~\text{CB}^{i^*, i^*, i}_{n_{i^*}, n_{i^*}-1, j} \geq 0$ & \\ \midrule
 
\multirow{8}{*}{\begin{tabular}[c]{@{}l@{}}\hypertarget{c62}{C6.2} \\ Calendar butterfly II \\ (Relative location \\ ~convexity)\end{tabular}} & $\forall i^* \in [1,m], ~ j^* \in [1,n_{i^*}-1],$ & \multirow{8}{*}{$\mathcal{O} (m^2 N)$} \\
 & $\text{define } \mathcal{I}_1:=\{i,j: T_i > T_{i^*}, ~k^{i^*}_{j^*-1} < k^i_j < k^{i^*}_{j^*}\},$ & \\
 & $\mathcal{I}_2:=\{i,j: T_i > T_{i^*}, ~k^{i^*}_{j^*} < k^i_j < k^{i^*}_{j^*+1}\},$ & \\ 
 & $\forall i_1,j_1 \in \mathcal{I}, ~\forall i_2,j_2 \in \mathcal{I}_2, ~\text{CB}^{i^*, i_1, i_2}_{j^*, j_1, j_2} \geq 0;$ & \\ \cmidrule(lr){2-2}
 & $\forall i^* \in [1,m],$ & \\
 & $\text{define } \mathcal{I}_1:=\{i,j: T_i > T_{i^*}, ~k^{i^*}_{n_{i^*}-1} < k^i_j < k^{i^*}_{n_{i^*}}\},$ & \\ 
 & $\mathcal{I}_2:=\{i,j: T_i > T_{i^*}, ~k^i_j > k^{i^*}_{n_{i^*}}\},$ & \\
 & $\forall i_1,j_1 \in \mathcal{I}, ~\forall i_2,j_2 \in \mathcal{I}_2, ~\text{CB}^{i^*, i_1, i_2}_{n_{i^*}, j_1, j_2} \geq 0$ & \\
\bottomrule
\end{tabular}
\caption{The reduced set of static arbitrage constraints}
\label{tab:reduced_constraints}
\end{table}

We give details of the localisation method in Appendix \ref{appendix:localisationConstraints}. We claim that the reduced set of constraints listed in Table \ref{tab:reduced_constraints} are sufficient to imply Cousot's constraints, thus are sufficient and necessary to guarantee no-arbitrage, as stated in Proposition \ref{prop:reducedConstraint}.

\begin{proposition}\label{prop:reducedConstraint}
If the constraints \hyperlink{c1}{C1} -- \hyperlink{c61}{C6} are satisfied, then all test strategies are non-negative, and all test vertical spreads are not greater than 1.
\end{proposition}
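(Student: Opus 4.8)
The plan is to deduce the full $\mathcal{O}(N^3)$ family of Cousot's inequalities from the $\mathcal{O}(m^2N)$ inequalities \hyperlink{c1}{C1}--\hyperlink{c61}{C6}, exploiting that the shape requirements in \eqref{eq:surface_no_arbitrage} --- boundedness, positivity, monotonicity in $T$, monotonicity and convexity in $k$ --- are all \emph{local}. It is convenient to restate every test strategy in terms of chord slopes on the $(k,c)$ plane: writing $\mathrm{sl}(A,B):=(c_B-c_A)/(k_B-k_A)$ for two (possibly augmented) data points $A=(k_A,c_A)$, $B=(k_B,c_B)$, a spread $\text{S}^{i_1,i_2}_{j_1,j_2}$ is $-\mathrm{sl}$ of the corresponding segment and is $\geq 0$ iff that slope is $\leq 0$ (and $\leq 1$ iff the slope is $\geq -1$), while a butterfly $\text{B}^{i,i_1,i_2}_{j,j_1,j_2}$ with $k^{i_1}_{j_1}<k^i_j<k^{i_2}_{j_2}$ equals $\mathrm{sl}(P,R)-\mathrm{sl}(P,L)$ and is $\geq 0$ iff $\mathrm{sl}(P,L)\leq\mathrm{sl}(P,R)$, where $P=(k^i_j,c^i_j)$ is the middle leg, $L$ the left leg, and $R$ the right leg.

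I would first dispose of the single-expiry statements. For a fixed $i$, \hyperlink{c2}{C2} says the successive chords of $j\mapsto c^i_j$ on $\{0=k^i_0<\dots<k^i_{n_i}\}$ have slopes $\leq 0$, with the first one $\geq-1$; \hyperlink{c3}{C3} says those slopes are non-decreasing in $j$; and $c^i_0=1$ together with \hyperlink{c1}{C1}--\hyperlink{c2}{C2} forces $0\leq c^i_j\leq 1$. Hence $j\mapsto c^i_j$ is sampled from a decreasing, convex, piecewise-linear curve all of whose slopes lie in $[-1,0]$. Any chord slope $\beta(i,j_1;i,j_2)$ is then a convex combination of successive slopes and so lies in $[-1,0]$, which gives $\text{VS}^i_{j_1,j_2}\in[0,1]$ for all $j_1>j_2$; and since the slopes over $[j_1,j]$ are all at most those over $[j,j_2]$, averaging gives $\text{VB}^i_{j,j_1,j_2}\geq 0$ for all $j_1<j<j_2$.

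The cross-expiry spreads are next. All calendar spreads are literally \hyperlink{c4}{C4}, so nothing is needed. For an arbitrary calendar vertical spread $\text{CVS}^{i_1,i_2}_{j_1,j_2}$ (so $i_1<i_2$ and $k^{i_1}_{j_1}>k^{i_2}_{j_2}$) I would locate $k^{i_2}_{j_2}$ in the strike grid of expiry $i_1$: if $k^{i_2}_{j_2}=k^{i_1}_{j'}$ for some $j'<j_1$ then $c^{i_2}_{j_2}\geq c^{i_1}_{j'}\geq c^{i_1}_{j_1}$ by \hyperlink{c4}{C4} and a single-expiry vertical spread; if instead $k^{i_1}_{j^*-1}<k^{i_2}_{j_2}<k^{i_1}_{j^*}$ for some $j^*\leq j_1$, then \hyperlink{c5}{C5} gives $c^{i_1}_{j^*}\leq c^{i_2}_{j_2}$ and a vertical spread gives $c^{i_1}_{j_1}\leq c^{i_1}_{j^*}$. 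Either way $c^{i_1}_{j_1}\leq c^{i_2}_{j_2}$, i.e. $\text{CVS}^{i_1,i_2}_{j_1,j_2}\geq 0$.

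The main obstacle is the general calendar butterfly $\text{CB}^{i,i_1,i_2}_{j,j_1,j_2}$: its middle leg $P=(k^i_j,c^i_j)$ carries the \emph{shortest} maturity, so $P$ itself cannot be moved by any calendar argument and only the two wings can be relocated. I would fix $P$ and classify where the wing strikes $k^{i_1}_{j_1}<k^i_j<k^{i_2}_{j_2}$ fall relative to the strike grid of expiry $i$. When each wing is either a grid point of expiry $i$ adjacent to position $j$ or a later-maturity point lying in a cell of expiry $i$ immediately flanking $k^i_j$, the inequality is exactly one of the localised constraints \hyperlink{c61}{C6.1} (one wing a grid point, one a nearby later point; three sub-cases for the side and for the top strike $k^i_{n_i}$) or \hyperlink{c62}{C6.2} (both wings nearby later points). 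For a general configuration I would transport each wing inward --- towards $k^i_j$ --- step by step, using the calendar and calendar-vertical spreads already proved, the single-expiry convexity, and chaining of adjacent-cell butterflies, so that $\mathrm{sl}(P,L)$ only increases and $\mathrm{sl}(P,R)$ only decreases until the localised form \hyperlink{c61}{C6.1}/\hyperlink{c62}{C6.2} applies. Making every one of these transport steps move the relevant slope in the favourable direction is genuinely delicate, because the strike gaps to the \emph{left} of $P$ are negative, so several of the steps invert the inequality, and because a cell flanking $k^i_j$ may contain no data point at all; keeping this bookkeeping straight --- equivalently, exhibiting that \hyperlink{c1}{C1}--\hyperlink{c61}{C6} already pin down a full shape-constrained interpolant --- is where the argument needs the care set out in Appendix~\ref{appendix:localisationConstraints}.
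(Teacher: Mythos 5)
Your decomposition is the same as the paper's: single-expiry properties from \hyperlink{c1}{C1}--\hyperlink{c3}{C3}, calendar and calendar-vertical spreads from \hyperlink{c2}{C2}, \hyperlink{c4}{C4}, \hyperlink{c5}{C5}, and calendar butterflies from \hyperlink{c3}{C3} and \hyperlink{c61}{C6}. The first two parts are correct and essentially identical to the paper's Lemmas \ref{lemma:arbitrage1} and \ref{lemma:arbitrage2}: your observation that any chord slope $\beta(i,j_1;i,j_2)$ is a strike-gap-weighted average of adjacent slopes, each lying in $[-1,0]$ and ordered by \hyperlink{c3}{C3}, is a slightly cleaner packaging of the paper's induction on the claims $\beta(i,j;i,j_1)\leq\beta(i,j;i,j_1+1)\leq\cdots$, and your two-case localisation of $k^{i_2}_{j_2}$ in the strike grid of expiry $i_1$ for the calendar vertical spread is exactly the paper's argument.

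The gap is in the calendar butterfly step, which is the technical core of the proposition and which you explicitly leave unexecuted. What is missing is the precise mechanism of the ``transport'': locating the left wing in a cell of expiry $i$, say $k^{i_1}_{j_1}\in[k^i_{j-p-1},k^i_{j-p})$, the inequality you need is $\beta(i,j;i_1,j_1)\leq\beta(i,j;i,j-p)$, and it is \emph{not} obtained by walking the wing through intermediate cells (which, as you worry, may be empty). It is obtained in one shot: \hyperlink{c61}{C6.1} applied at $j^*=j-p$ (the cell actually containing the wing) gives $\beta(i,j-p;i_1,j_1)\leq\beta(i,j-p+1;i,j-p)$; the \hyperlink{c3}{C3} chain among grid points of expiry $i$ gives $\beta(i,j;i,j-p)\geq\beta(i,j-p+1;i,j-p)$, hence $\beta(i,j-p;i_1,j_1)\leq\beta(i,j;i,j-p)$; and a convex-combination (or equivalently a sign-careful rearrangement of $c^{i_1}_{j_1}$ against the line through $(k^i_{j-p},c^i_{j-p})$ and $(k^i_j,c^i_j)$) converts this shared-endpoint comparison at $j-p$ into the needed comparison at $j$. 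Together with the mirror statement $\beta(i_2,j_2;i,j)\geq\beta(i,j+q;i,j)$ for the right wing and the innermost comparison $\beta(i,j;i,j-1)\leq\beta(i,j+1;i,j)$ from \hyperlink{c3}{C3} (or \hyperlink{c61}{C6.1}/\hyperlink{c62}{C6.2} when $p=0$ or $q=0$, and the separate treatment of $k^{i_2}_{j_2}>k^i_{n_i}$), this closes the argument. These two slope-monotonicity claims and their rearrangements are precisely the content of the paper's Lemma \ref{lemma:arbitrage3}, and they are what your proposal defers; your concern about empty flanking cells is a red herring once the transport is organised this way.
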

\begin{proof}
See Appendix \ref{appendix:proofReducedConstraint}.
\end{proof}

\section{Arbitrage repair}
\label{sec:arbitrage_repair}

The static arbitrage constraints in Table \ref{tab:reduced_constraints} are linear inequalities of at most three call prices. Therefore, we can write these constraints in the form $A \mathbf{c} \geq \mathbf{b}$, where $\mathbf{c} = [c^1_1 ~\cdots ~c^1_{n_1} ~\cdots ~c^m_{n_m}]^\top \in \mathbb{R}^N$, and $A  = (a_{ij}) \in \mathbb{R}^{R \times N}$ and $\mathbf{b} = (b_j) \in \mathbb{R}^{R}$ are a constant matrix and a vector corresponding to coefficients and bounds of the inequalities, respectively, that are completely determined by the expiries and strikes of observed options. Here, $R$ is the number of no-arbitrage constraints, where $R \sim \mathcal{O}(m^2 N)$. These constraints are \textit{feasible} by construction, i.e. $\{\mathbf{x} \in \mathbb{R}^N: A \mathbf{x} \geq \mathbf{b} \} \neq \emptyset$, because $\mathcal{S}(\mathcal{T}^e \times [0, \max_{i,j} k^i_j] ) \neq \emptyset$, (for example, the prices under a Black--Scholes model satisfy the requirements).

When some row of the system of inequalities $A \mathbf{c} \geq \mathbf{b}$ is not satisfied, there is arbitrage. We define $\bm{\varepsilon}$ to be the vector of perturbations added to the vector of call prices $\mathbf{c}$ such that the perturbed prices are arbitrage-free, i.e. $A (\mathbf{c} + \boldsymbol{\varepsilon}) \geq \mathbf{b}$. Hence, to remove arbitrage from the call price data, we seek the ``minimal'' repair subject to no-arbitrage constraints:

\begin{equation}
\underset{\boldsymbol\varepsilon \in \mathbb{R}^N}{\text{min}} f(\boldsymbol\varepsilon), \quad \text{subject to } A \boldsymbol{\varepsilon} \geq \mathbf{b} - A \mathbf{c},
\label{eq:repair}
\end{equation}
where the objective $f: \mathbb{R}^N \rightarrow \mathbb{R}$ measures how much the perturbation deviates from zero. The formulation (\ref{eq:repair}) is \textit{feasible} because its constraints are feasible.

\subsection{Design of the objective without liquidity consideration}

We start from the simple case where there is no liquidity difference among options. It seems natural to use the $\ell^2$-norm for measuring the size of perturbations due to its convexity and computational efficiency when optimising by gradient-based methods. The $\ell^2$-norm has been widely used in data smoothing algorithms, such as \cite{AitSahalia2003}, \cite{Fengler2009}, \cite{Fengler2012} and \cite{Fengler2015}.

However, the $\ell^2$-norm usually leads to small perturbations for \textit{all} prices, while in our application \textit{sparse} perturbation is desirable. An alternative is the $\ell^0$-norm\footnote{Note that the $\ell^0$-norm is not actually a ``norm'' as it violates the homogeneity and triangle inequality properties that a vector norm must satisfy.}, which is a natural way of comparing difference, and produces sparse solutions. Nevertheless, the $\ell^0$-norm is nonconvex and in general leads to an NP-hard \cite{Natarajan1995} optimisation problem. Hence, it is natural to consider the $\ell^1$-norm, which is well known as a convex relaxation of the $\ell^0$-norm. In fact, optimal solutions of the $\ell^0$ and $\ell^1$ norms objectives are equivalent under certain conditions, see \cite{Candes2005}, \cite{gribonva2003} and \cite{Donoho2003}.

Choosing the $\ell^1$-norm has other benefits. When minimising a convex continuous objective function like the $\ell^1$-norm, every local minimum is a global minimum, see Chapter 4 of \cite{Boyd2004}. In addition, our repair problem is a Linear Programming (LP) problem with the $\ell^1$-norm objective, which can be solved fairly quickly even for large-scale problems. Finally, compared with the $\ell^2$-norm, the $\ell^1$-norm is more robust to outliers because the $\ell^2$-norm squares values, which increases the cost of outliers quadratically, see Huber \cite{Huber1981}.

Consequently, the $\ell^1$-norm is a natural candidate for the objective function.
Blacque-Florentin and Missaoui \cite{Blacque2016} also choose the $\ell^1$-norm as objective when fitting tensor polynomials to sparse data, as inspired by the compressed sensing framework. The differences between our work and theirs are that they are concerned with smoothing data rather than repairing data, and assume a rectangular grid of strikes and expiries. The $\ell^1$-norm optimisation with linear constraints can be expressed as an LP problem. We write the objective function as $f(\bm\varepsilon) := ||\bm\varepsilon||_{\ell^1} = \sum_{i=j}^N |\varepsilon_j| = \sum_{i=j}^N \left( \varepsilon_j^+ + \varepsilon_j^- \right)$, where $\varepsilon_j^+ = \max (\varepsilon_j, 0), \quad \varepsilon_j^- = -\min (\varepsilon_j, 0)$ for each $j$. We denote $\bm\varepsilon^+ = [\varepsilon_1^+ \cdots \varepsilon_N^+]$ and $\bm\varepsilon^- = [\varepsilon_1^- \cdots \varepsilon_N^-]$ so that $\bm\varepsilon = \bm\varepsilon^+ - \bm\varepsilon^-$. We define $B = [-A ~A]$ and $\bm\theta = [\bm\varepsilon^+  \bm\varepsilon^-]^\top$. Hence, the repair problem with the $\ell^1$-norm minimisation is equivalent to the following LP in canonical form:
\begin{equation}
\underset{\bm\theta}{\text{min }} \bm 1^\top \bm\theta, \quad \text{subject to } B \bm\theta \leq  A \mathbf{c} - \mathbf{b}, ~\bm\theta \geq \bm0.
\label{eq:repairLP1}
\end{equation}
After solving for an optimal $\hat{\bm\theta} = [\hat{\bm\varepsilon}^+ ~\hat{\bm\varepsilon}^-]^\top$, the optimal perturbation vector is recovered by $\hat{\bm\varepsilon} = (\hat{\bm\varepsilon}^+ - \hat{\bm\varepsilon}^-)^\top$.

\subsection{Inclusion of bid and ask prices}
\label{sec:include_bid_ask}

The reference prices will typically lie within their corresponding bid-ask price bounds. In the presence of arbitrage, we not only want minimal repair, but also wish to have as many perturbed prices falling within the bid-ask price bounds as possible. Specifically, a reference price with wider bid-ask spread shall be given more freedom to be perturbed. The sparsity of the solution of the $\ell^1$-norm optimisation is less desirable if perturbing a larger number of prices can keep more perturbed prices within the bid-ask price bounds.

\subsubsection*{Design of the objective with bid and ask prices}

We consider using the best bid/ask prices for data repair. To incorporate bid-ask price constraints into the repair problem, we  revise the objective function $f$ rather than adding extra constraints. In other words, we treat bid-ask price bounds as \textit{soft constraints} rather than \textit{hard constraints} like the arbitrage constraints. There may not be arbitrage-free prices within the bid-ask price bounds, and adding bid-ask price bounds as hard constraints may cause the repair problem to be infeasible.

We choose an objective function of the form $f(\bm\varepsilon) = \sum_{j=1}^N f_j(\varepsilon_j)$ with $f_j(x) \geq 0$ for $x\in \mathbb{R}$. Then $f_j(x)$ can be naturally interpreted as the cost of perturbing the $j$-th option price, and $\diff f_j(x) / \diff |x| > 0$ (if defined) gives the marginal cost. The $\ell^1$-norm objective sets $f_j(x) = |x| = \max (-x, x)$ and any perturbation $x$, where $|x|>0$, has marginal cost $1$ for all $j$. Let $\delta^a_j, \delta^b_j > 0$ be ask-reference spread and bid-reference spread for the $j$-th price, respectively. To incorporate these spreads into the objective, we require that $f_j(x)$ should have the following properties, for all $j \in [1,N]$:
\begin{enumerate}[label=(\arabic*)]
\item $f_j(0) = \inf_{x} f_j(x) = 0$. The minimum is attained when there is no perturbation, which is costless to the objective;
\item ${f}_j(x)$ is monotonically increasing (decreasing) for $x>0$ ($x<0$);
\item $f_j(-\delta^b_j) = f_j(\delta^a_j) = \delta_0$, where $\delta_0 \geq 0$ is a constant. The cost of perturbing a price to its bid or ask price is the same for all options;
\item $\diff f_j(x) / \diff |x| = 1$ for $x \in (-\infty, -\delta^b_j) \cup (\delta^a_j, +\infty)$. The marginal cost of perturbing a price out of the bid-ask price bounds is the same for all options.
\end{enumerate}

We therefore propose the following objective that meets all the properties, and, with particular merit, retains the ability to be expressed as an LP:
\begin{equation*}
f_j (x) = \max \left( -x-\delta^b_j+\delta_0, ~-\frac{\delta_0}{\delta^b_j} x, ~\frac{\delta_0}{\delta^a_j} x, ~x - \delta^a_j + \delta_0 \right),
\end{equation*}
with $\delta_0 \leq \min(\delta^a_j, \delta^b_j)$ for all $j \in [1,N]$, as such the marginal cost of perturbing a price within the bid-ask price band is not greater than the marginal cost of perturbing mid prices outside the bid-ask price bounds.
\begin{figure}[!h]  
\centering 
  \begin{subfigure}[b]{0.31\linewidth}
    \begin{tikzpicture}[scale=1][
   	thick,
    >=stealth']
  	
  	\coordinate (O) at (0,0);

 	\draw[->] (-1.8,0) -- (1.8,0) coordinate[label = {below:$x$}] ();
  	\draw[->] (0,-0.1) -- (0,2.5) coordinate[label = {right:$f_j^{\ell^1}(x)$}] (ymax);
	
	\draw[gray] (0,0)--(-1.5,1.5);
	\draw[gray] (0,0)--(1.5, 1.5);	
	
\end{tikzpicture}
    \caption{$f_j^{\ell^1}(x)$} \label{fig:repairObj1}  
  \end{subfigure}
  \begin{subfigure}[b]{0.66\linewidth}
	\begin{tikzpicture}[scale=1][
   	thick,
    >=stealth']
  	
  	\coordinate (O) at (0,0);

 	\draw[->] (-3.8,0) -- (3.8,0) coordinate[label = {below:$x$}] ();
  	\draw[->] (0,-0.1) -- (0,2.5) coordinate[label = {right:${f}_j(x)$}] (ymax);
	
	\draw[dashed] (-3.5,.8) -- (3.5,.8); 	
	
	\draw[gray] (0,0) -- (-1.2, .8);
	\draw[gray] (-1.2,.8) -- (-2.4, 2);
	\draw[gray] (0,0) -- (2, 2) node[above, black]{${f}_1(x)$};
	
	\draw[gray] (0,0) -- (-1.8,.8);
	\draw[gray] (0,0) -- (1.8, .8);	
	\draw[gray] (-1.8,.8) -- (-3,2);
	\draw[gray] (1.8, .8) -- (3, 2) node[above, black]{${f}_2(x)$};
	
	\draw[gray] (0,0) -- (-2.8,.8);
	\draw[gray] (0,0) -- (2.6, .8);	
	\draw[gray] (-2.8,.8) -- (-3.3,1.3);
	\draw[gray] (2.6, .8) -- (3.1, 1.3) node[above, black]{${f}_3(x)$};
	
	\draw[<->] (-3.2,0) -- node[fill=white, minimum height=0.2cm, text height=0.2cm] {$\delta_0$} (-3.2,.8);
	
	\draw[dashed] (-1.2,.8) -- (-1.2,0) node[below]{$-\delta^b_2$};
	\draw[dashed] (-1.8,.8) -- (-1.8,0) node[below]{$-\delta^b_1$};
	\draw[dashed] (-2.8,.8) -- (-2.8,0) node[below]{$-\delta^b_3$};
	\draw[dashed] (.8,.8) -- (.8,0) node[below]{$\delta^a_1$};
	\draw[dashed] (1.8,.8) -- (1.8,0) node[below]{$\delta^a_2$};
	\draw[dashed] (2.6,.8) -- (2.6,0) node[below]{$\delta^a_3$};
	
\end{tikzpicture}
	\caption{${f}_j(x)$} \label{fig:repairObj2}
  \end{subfigure}
\caption{Plot of the objective function component $f_j^{\ell^1}(x)$ and ${f}_j(x)$}
\label{fig:repairObj}
\end{figure}
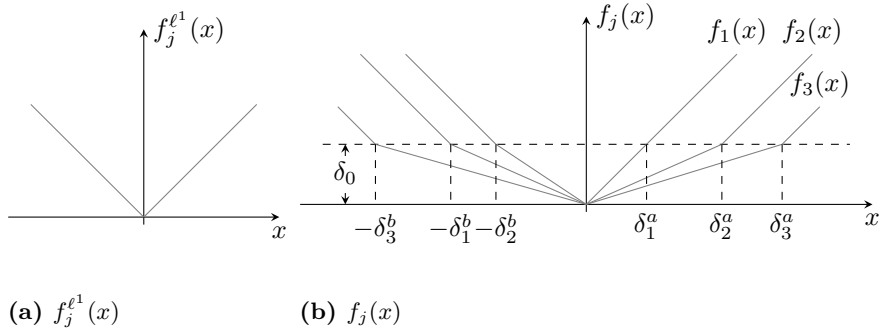

Denote $f_j^{\ell^1}$ as the $j$-th component of the $\ell^1$-norm objective. We visualise the difference between $f_j^{\ell^1}$ and ${f}_j$ in Figure \ref{fig:repairObj}. Note that $f_j^{\ell^1}$ is a special case of ${f}_j$ when $\delta^a_j = \delta^b_j = \delta_0 > 0$ for all $j$. Choosing smaller $\delta_0$ makes it relatively more costly to move prices outside of their bid-ask price bounds. Nevertheless, letting $\delta_0 = 0$ causes the optimisation problem to be ill-posed as it admits infinitely many solutions. For example, if $\varepsilon_j^*=0$ is optimal, then so is $\varepsilon_j^* = \omega \min(\delta^a_j,\delta^b_j)$ for all $\omega \in [0,1]$. In practice, we will choose
\begin{equation}
\delta_0 = \frac{1}{N} \wedge \min_{j=1,\dots,N} \left( \delta^a_j \wedge \delta^b_j \right).
\label{eq:delta_0}
\end{equation}
This means we prefer to move all options (by $\bm\varepsilon$) within the bid-ask, rather than moving one option outside its bid-ask bounds.

Hence, the objective function taking into account bid-ask spread is
\begin{equation}
f(\bm\varepsilon) = \sum_{j=1}^N \max \left( -\mathbf{e}_j^\top \bm\varepsilon -\delta^b_j+\delta_0, ~-\frac{\delta_0}{\delta^b_j} \mathbf{e}_j^\top \bm\varepsilon, ~\frac{\delta_0}{\delta^a_j} \mathbf{e}_j^\top \bm\varepsilon, ~\mathbf{e}_j^\top \bm\varepsilon - \delta^a_j + \delta_0 \right),
\label{eq:repair_obj_ba}
\end{equation}
where $\mathbf{e}_j$ is the standard basis vector for $\mathbb{R}^N$ with its $j$-th element being 1 and others being 0. With objective (\ref{eq:repair_obj_ba}), we can rewrite the repair problem (\ref{eq:repair}) as the following LP by introducing auxiliary variables $\mathbf{t} = [t_1 ~\cdots ~t_N]^\top$:
\begin{equation}
\begin{aligned}
& \underset{\bm\varepsilon, \mathbf{t}}{\text{minimise}}
& & \sum_{j=1}^{N} t_j \\
& \text{subject to}
& & -\varepsilon_j - \delta^b_j + \delta_0 \leq t_j, ~\varepsilon_j - \delta^a_j + \delta_0 \leq t_j, \quad \forall j \in [1,N], \\
& & & -\frac{\delta_0}{\delta^b_j} \varepsilon_j \leq t_j, ~\frac{\delta_0}{\delta^a_j} \varepsilon_j \leq t_j, \quad \forall j \in [1,N], \\
& & & - A \bm\varepsilon \leq  - \mathbf{b} + A \mathbf{c}.
\end{aligned}
\label{eq:repair_PLO}
\end{equation}

After solving for the optimal perturbation vector $\hat{\bm\varepsilon}$, we get the arbitrage-free normalised call price $\hat{\mathbf{c}} = \mathbf{c} + \hat{\bm\varepsilon}$. For each $i,j$, the arbitrage-free call price is $\widehat{C}^i_j = \hat{c}^i_j D_i F_i$.

\subsubsection*{Executable arbitrage opportunities}
\label{sec:exe_arbitrage}

We refer to the objective function taking into account bid-ask spread with $\delta_0$ as in (\ref{eq:delta_0}) as the $\ell^1$-BA objective. We define the \textit{effectively perturbed prices} as those that are perturbed outside of the bid-ask price bounds. We denote the number of perturbed (resp. effectively perturbed) prices by $N^\varepsilon$ (resp. $N^{\varepsilon, \delta}$), thus
\begin{equation}
N^{\varepsilon} = \sum_{j=1}^N \mathbbm{1}_{\{|\varepsilon_j| > 0\}}, \quad N^{\varepsilon, \delta} = \sum_{j=1}^N \mathbbm{1}_{\{\varepsilon_j > \delta_j^a\} \cup \{\varepsilon_j < -\delta_j^b\}}.
\label{eq:eff_pertub}
\end{equation}

We say an arbitrage is \textit{executable} if we can realise it by buying and selling its components at their ask and bid quotes, respectively. The arbitrage detected in options' reference prices is not necessarily executable. However, if the $\ell^1$-BA repair results in effective perturbations, i.e. $N^{\varepsilon, \delta} > 0$, then there must exist executable arbitrages. To see this, let $E_j = [c_j - \delta^b_j, c_j + \delta^a_j]$, and we can characterise $N^{\varepsilon, \delta} > 0$ as
\begin{equation*}
    \text{if } \forall i \in [1,R], ~\sum_{j=1}^N a_{ij} \hat{c}_j \geq b_i,
    \text{ then } \exists j \in [1,N] ~\text{s.t. } \hat{c}_j \not\in E_j. 
\end{equation*}
Equivalently, its contrapositive statement is
\begin{equation*}
    \text{if } \forall j \in [1,N], ~\hat{c}_j \in E_j,
    \text{ then } \exists i^* \in [1,R] ~\text{s.t. } \sum_{j=1}^N a_{i^*j} \hat{c}_j < b_{i^*}.
\end{equation*}
Therefore, it holds that
\begin{equation*}
    \sum_{j=1}^N a_{i^*j} \left[ (c_j + \delta^a_j) \mathbbm{1}_{\{a_{i^*j} \geq 0\}} + (c_j - \delta^b_j) \mathbbm{1}_{\{a_{i^*j} < 0\}} \right] < b_{i^*}.
\end{equation*}
By going long on the left-hand side and going short on the right side of the inequality, we construct a portfolio that makes immediate positive profit, while the portfolio has non-negative future payoffs. The left-hand side of the inequality consists of positions in options, for which we buy at ask price $(c_j + \delta^a_j)$ and sell at bid price $(c_j - \delta^b_j)$.

\section{Empirical studies}
\label{sec:numerics}

We carry out a series of empirical studies. We show that arbitrage is frequently present in historical price data, so repairing data is important. We also demonstrate empirical performances of the repair method in terms of sparsity, speed and improvement to model calibration. Last, we use the $\ell^1$-BA repair for identifying the formation and disappearance of arbitrage in the intra-day S\&P 500 options market on a day when the market underwent a regime switch.

\subsection{Presence of arbitrage in historical option price data}

We collect daily close (bid, ask and mid) prices from 1st November, 2007 to 31st May, 2018 for OTC FX options from Bloomberg. Bloomberg provides price quoted as implied volatility given in terms of delta. We choose 13 benchmark tenors (expiries) from overnight (one-day) to two-year. For each tenor, a list of standard \textit{instruments} are available: at-the-money (ATM), risk-reversal (RR) and butterfly (BF). We choose the liquid 10-delta, 15-delta, 25-delta and 35-delta instruments, and construct a vanilla volatility smile of 9 moneynesses for each tenor. Following the OTC FX market conventions \cite{Wystup2017}, we compute strike and time-to-expiry for each IV mid quote, and generate vanilla IV spreads from the bid/ask quotes for the instruments\footnote{Given the \textit{instrument bid-ask spreads} for ATM, RR and BF, one cannot uniquely determine the corresponding \textit{vanilla spreads} without specifying some rule. For example, in practice, trading desks may estimate vanilla spreads only using ATM spreads, which makes the spread of each option at the same expiry equal, see Section 4.2.1 of \cite{Wystup2017}. Since vanilla IVs are linear transformations of instrument IVs, we conservatively assume that vanilla spreads are weighted sums of instrument spreads. This does not take into account that delta-symmetric vanilla spreads are dependent on each other, and generates the widest possible bid-ask spreads for vanilla IVs.}. Thereafter, we calculate mid call prices and vanilla call price spreads using the mid vanilla IVs and generated vanilla IV spreads, together with Bloomberg FX mid forward curves. There are $117 = 13 \text{ (tenors)} \times 9 \text{ (deltas and ATM)}$ data points on the call price surface for each day. In Figure \ref{fig:sampleDataOTC}, we show an example of OTC call option mid-prices and IV mid-quotes on one day.
\begin{figure}[!h]
\centering
\includegraphics[scale=.59]{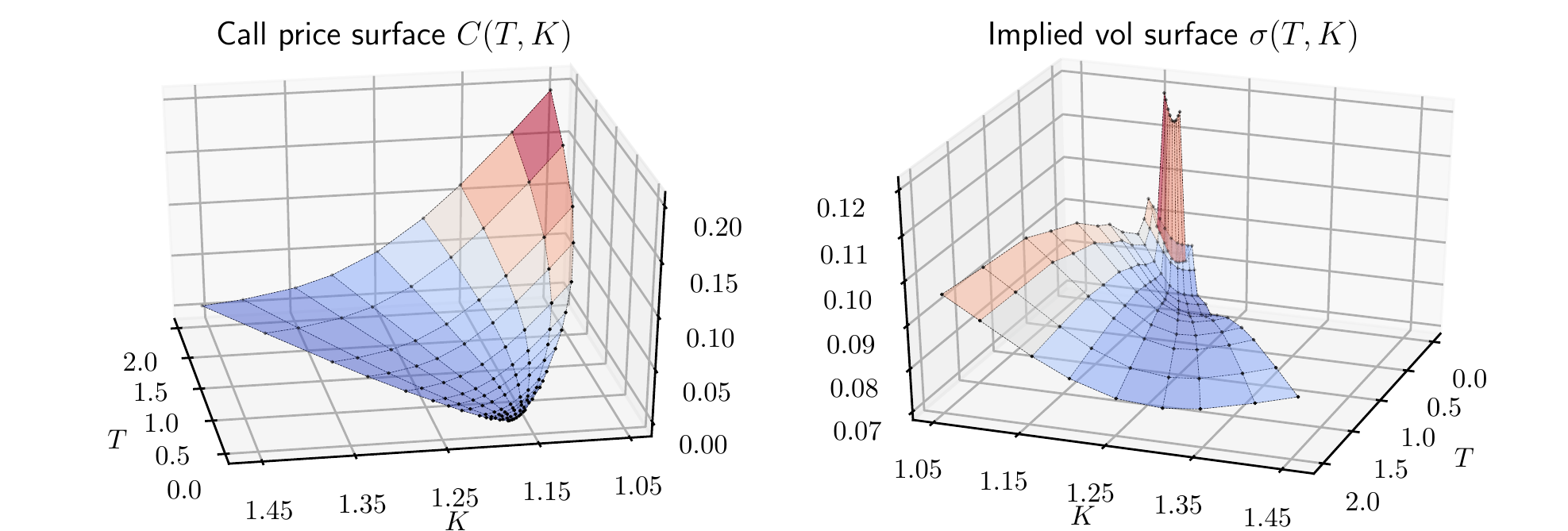}
\vspace{-4mm}
\caption{An example of observed OTC-traded call option prices. These are end of day prices settled by Bloomberg for EURUSD European call options as of 31st May, 2018.}
\label{fig:sampleDataOTC}
\end{figure}

We count violations of arbitrage constraints in raw daily close mid-prices over time for some major currencies and emerging market (EM) currencies. In Figure \ref{fig:detectionTsOtc}, we see that there are more arbitrages in the EM currency markets. We also see persistent clustering of (mild) arbitrages from early 2007 to mid 2012 in major currency markets. Further investigation suggests that these are caused by over-priced 1-day options, which result in calendar arbitrages with longer-dated options. We conjecture that the systematic appearance of the same type of arbitrage is due to Bloomberg's legacy data cleansing method. 

\begin{figure}[!h]
\centering
\includegraphics[scale=0.59]{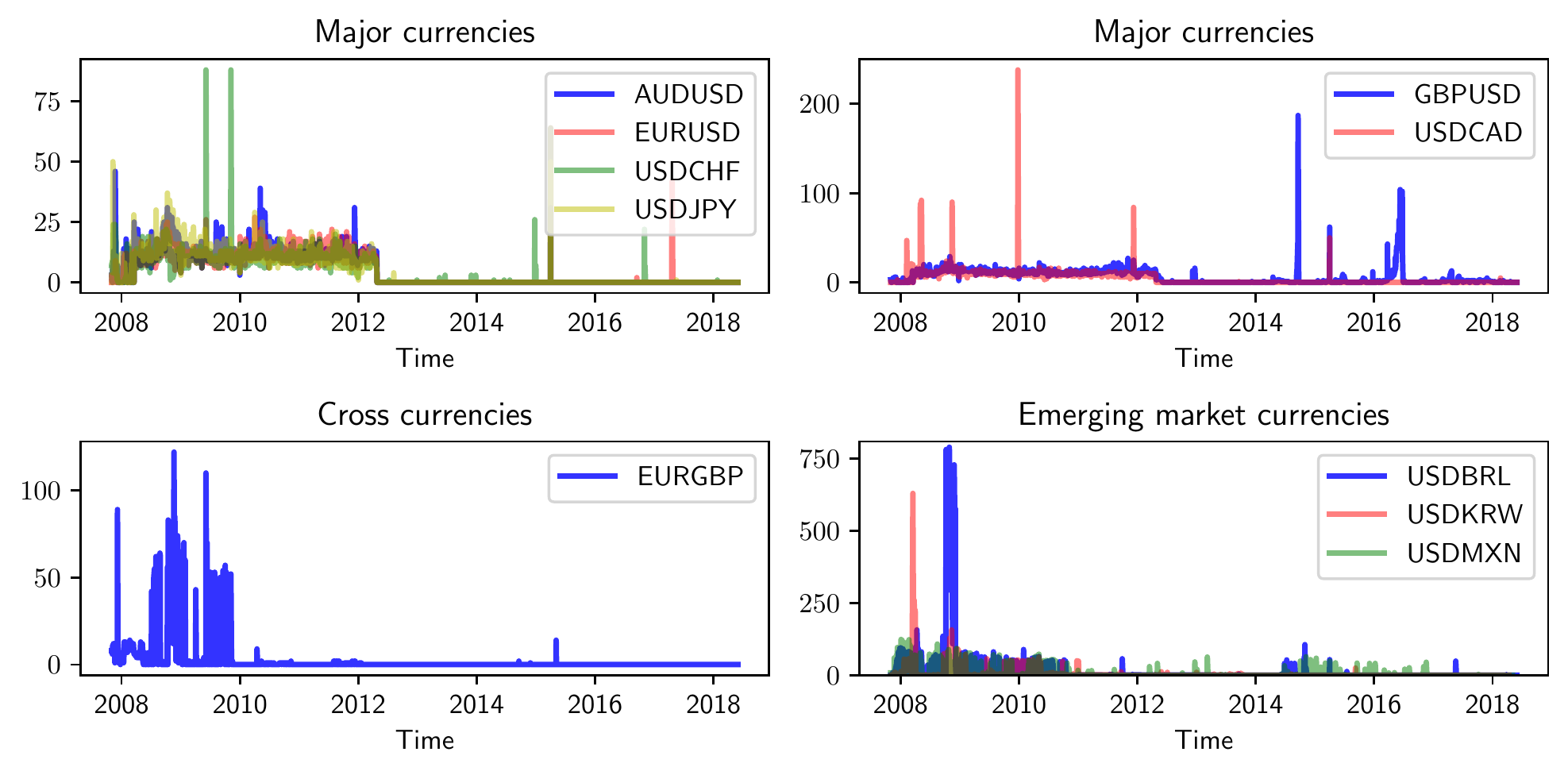}
\vspace{-4mm}
\caption{Time series of number of daily violated arbitrage constraints in OTC FX option market, during the period from 1st November, 2007 to 31st May, 2018.}
\label{fig:detectionTsOtc}
\end{figure}

Calendar arbitrage (especially CVS \hyperlink{c5}{C5} and CBS \hyperlink{c61}{C6}) is more difficult and costly to exploit than non-calendar arbitrages, as it requires rebalancing the hedging portfolio over time. Most arbitrage-free smoothing algorithms in the literature only remove calendar arbitrage of \hyperlink{c4}{C4} type, because they assume a rectangular grid of expiries and strikes. However, calendar arbitrage can be a major source of arbitrage. In Figure \ref{fig:detectionCalendarRatio}, we consider what fraction of the arbitrages are of calendar type for different currency pairs. Comparing medians (and overall distributions), as shown in the plot, the proportion of calendar arbitrages for major currencies (AUD, EUR, GBP, CAD, CHF, and JPY) is larger than that for EM currencies (BRL, KRW, and MXN), though the cross pair EURGBP is an exception. In fact, the medians are very close to 100\% for almost all major currencies except sterling. In other words, nearly all arbitrages in major currencies' option markets are calendar ones.

\begin{figure}[!h]
\centering
\vspace{-4mm}
\includegraphics[scale=0.59]{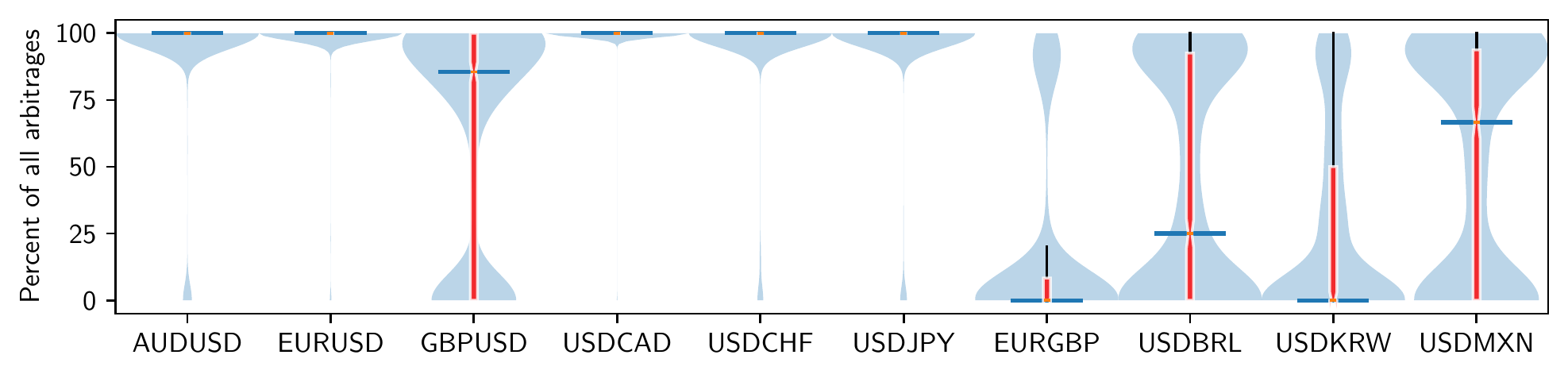}
\vspace{-4mm}
\caption{Fraction of calendar arbitrages on a given day, for different currency pairs during the period from 1st November, 2007 to 31st May, 2018. The light blue shadow is a violin plot which indicates the kernel density of the percentages, and the red notched box is a box plot. The horizontal short bar shows the median of each sample.}
\label{fig:detectionCalendarRatio}
\end{figure}

We examine the day when the EURUSD option price data have the most occurrences of calendar arbitrages over our observation period, and plot the call price curves for the first three expiries in Figure \ref{fig:repairCalendarEURUSD}. There is no non-calendar arbitrage on that day since each curve is non-increasing and convex. After the repair, the $T_1$-curve is pushed downwards until it does not lie beyond the other two curves, which ensures NDCO marginal risk-neutral measures.
\begin{figure}[!h]
\centering
\vspace{-4mm}
\includegraphics[scale=0.59]{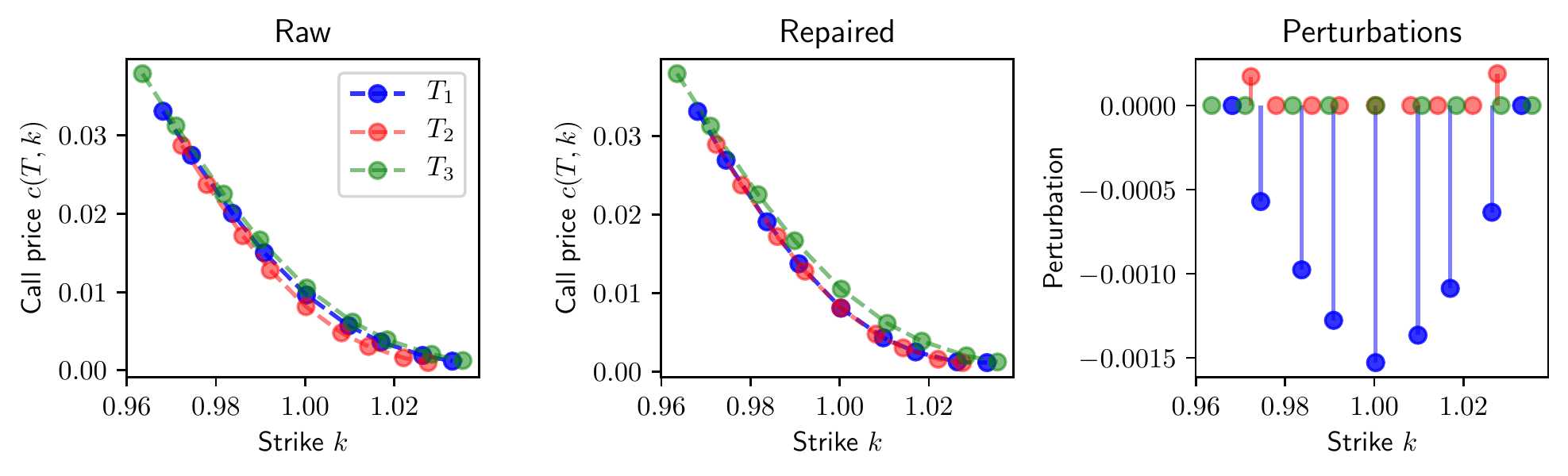}
\vspace{-4mm}
\caption{An example of arbitrage repair for EURUSD call options on 2nd April, 2015. \textit{Left} -- raw call price curves for the first three expiries. \textit{Middle} -- repaired arbitrage-free call price curves. \textit{Right} -- perturbations added to each data point.}
\label{fig:repairCalendarEURUSD}
\end{figure}

However, when calendar and non-calendar arbitrages are mixed, the perturbations added to ensure no arbitrage tend to be more varied in signs. For instance, in Figure \ref{fig:repairCalendarUSDBRL} we plot the call price curves for the first four expiries on the day when USDBRL options had the most occurrences of calendar arbitrage, however, there are also many non-calendar arbitrages. Unlike the above EURUSD example, the repair does not simply translate any curve. Therefore, the perturbations are not systematically negative. 
\begin{figure}[!h]
\centering
\vspace{-4mm}
\includegraphics[scale=0.59]{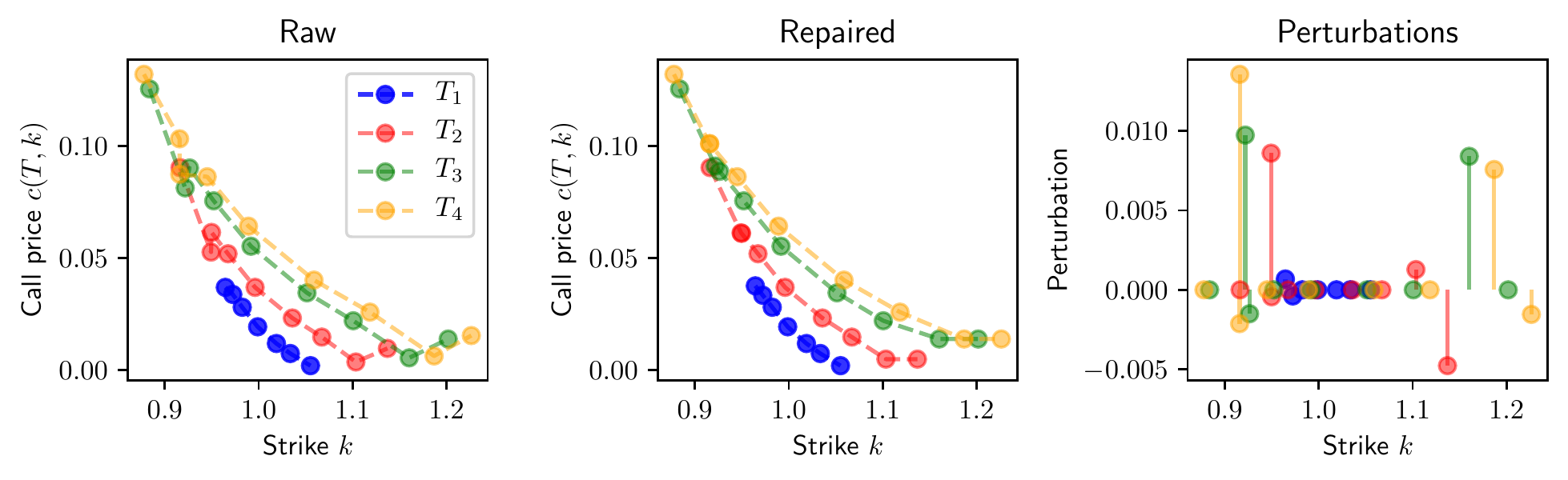}
\caption{An example of arbitrage repair for USDBRL call options on 28th October, 2008.}
\label{fig:repairCalendarUSDBRL}
\end{figure}

\vspace{-4mm}
\subsection{Properties of the repair method}

\subsubsection*{Sparse solution of the $\ell^1$-norm objective}

The $\ell^1$-norm objective leads to sparse perturbations. We show the fraction of perturbed prices in Figure \ref{fig:repairPerturbNumber}. The medians are very close to zero for all currency pairs, indicating that very few data points need to be perturbed on average to remove arbitrage. This is especially true for major currencies, as their distributions collapse almost entirely to zero.

\begin{figure}[!h]
\centering
\vspace{-4mm}
\includegraphics[scale=0.59]{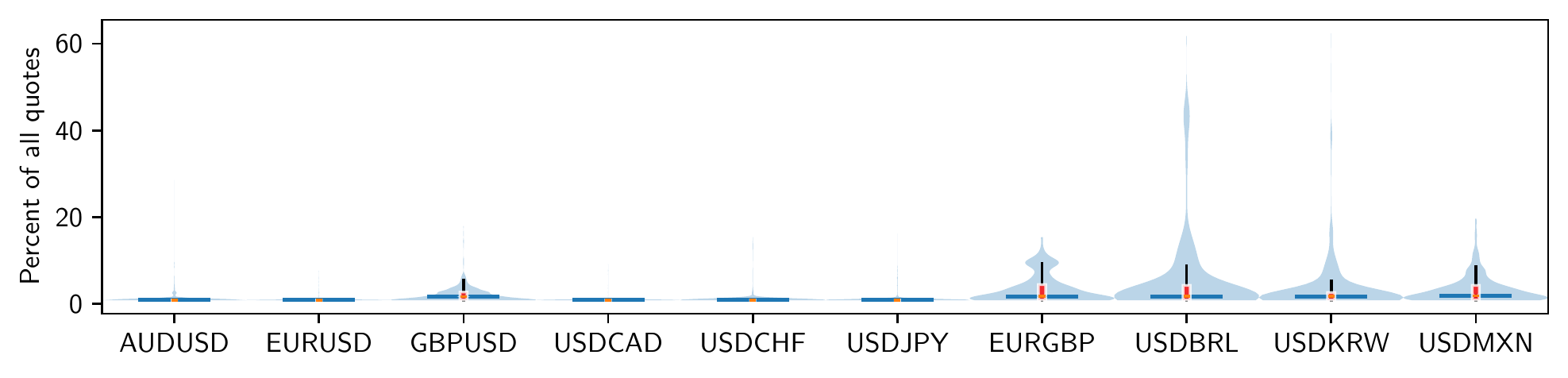}
\caption{Number of perturbed prices as a percentage of all prices, for different currency pairs during the period from 1st November, 2007 to 31st May, 2018.}
\label{fig:repairPerturbNumber}
\end{figure}



\subsubsection*{Computational time}

Our data repair method is designed to be fast due to the LP formulation. In addition, the reduction of arbitrage constraints shrinks the scale of the LP and speeds up the repair. We investigate the computational time of our repair method when applied to a few practical cases. All of the following studies were carried out on a quadcore Intel Core i7-8650U CPU with 32GB RAM. All LPs are solved using the GLPK (GNU Linear Programming Kit) solver wrapped by the CVXOPT \cite{cvxopt} Python package. 

In Figure \ref{fig:timeStats_otc}, we plot histograms of (1) the number of constraints $R \sim \mathcal{O}(m^2 N)$, (2) the fraction of violated constraints, and (3) the elapsed times for constructing the constraints (Table \ref{tab:reduced_constraints}) and solving the LP (\ref{eq:repairLP1}). We take EURUSD and USDBRL as representatives of major currencies and EM currencies, respectively. The number of constraints rarely exceeds 4000, while it takes less than 0.4 seconds to transform them to the matrix form in most cases. Solving the LP with $N=117$ variables and $R<4000$ constraints takes less than 0.05 seconds for EURUSD options and less than 0.1 seconds for USDBRL options. More violated constraints lead to higher computational time.

\begin{figure}[!h]
\centering
\vspace{-4mm}
\begin{subfigure}[b]{\textwidth}
    \includegraphics[scale=0.59]{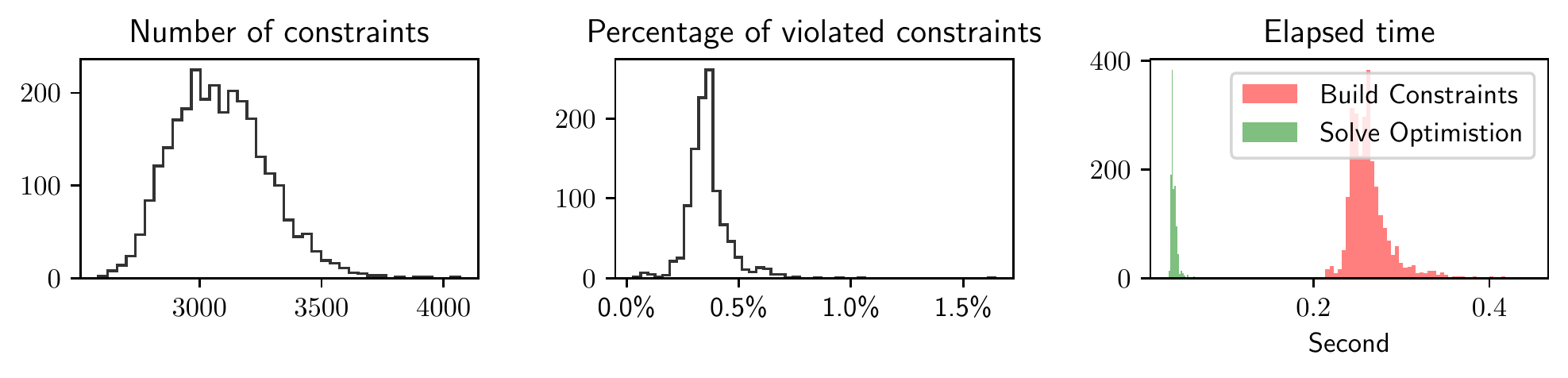}
    \caption{EURUSD options}
\end{subfigure} \\
\begin{subfigure}[b]{\textwidth}
    \includegraphics[scale=0.59]{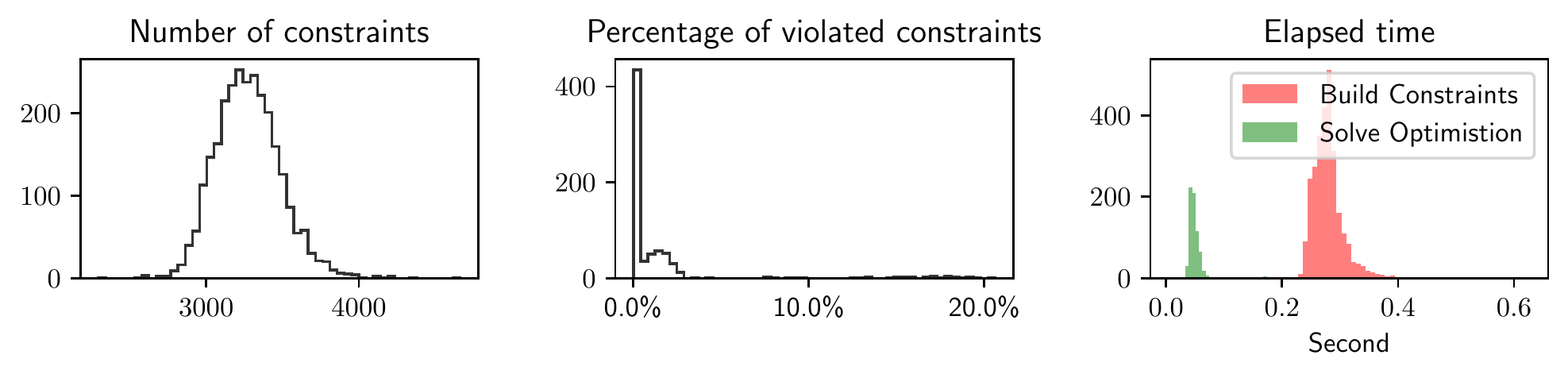}
    \caption{USDBRL options}
\end{subfigure} \\
\caption{Histograms of various statistics for repairing data of EURUSD options and USDBRL options, during the period from 1st November, 2007 to 31st May, 2018.}
\label{fig:timeStats_otc}
\end{figure}

To see how the repair method performs for larger-scale problems, we collect daily settled prices for all \textit{traded} EURUSD call options listed by CME from 1st January, 2013 to 31st December, 2018. The number of traded options varies from one day to another, see Figure \ref{fig:num_tradedoptions}. We show the distribution of traded expiry and strikes on a typical day in Figure \ref{fig:sampleDataETD}. In Figure \ref{fig:timeStats_etd_EURUSD}, we plot similar repair statistics to those for the OTC data. On average, there are 500 call prices per day, which result in on average 25000 arbitrage constraints to verify. Though the number of constraints is observed as high as 90000, it takes less than 1 second to construct them in the matrix form. Solving the LP now can take up to 6 seconds, but on average it only takes 1.44 seconds.

\begin{figure}[!h]
\centering
\begin{subfigure}[b]{0.4\textwidth}
    \centering
    \includegraphics[scale=0.59]{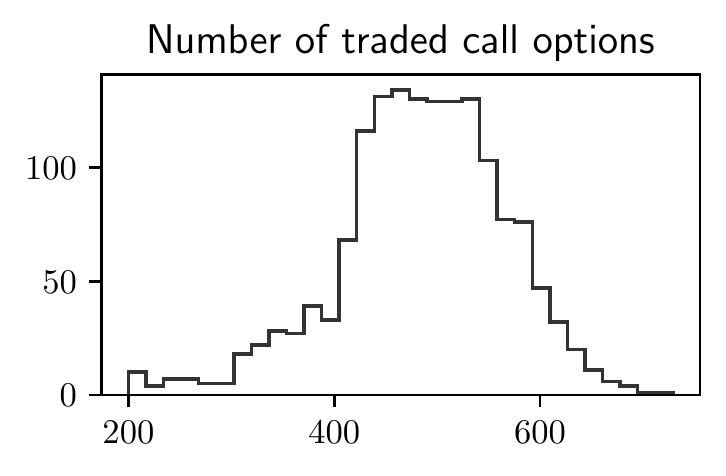}
    \caption{Histogram of the number of traded call options per day.}
    \label{fig:num_tradedoptions}
\end{subfigure}
\begin{subfigure}[b]{0.56\textwidth}
    \centering
    \includegraphics[scale=0.59]{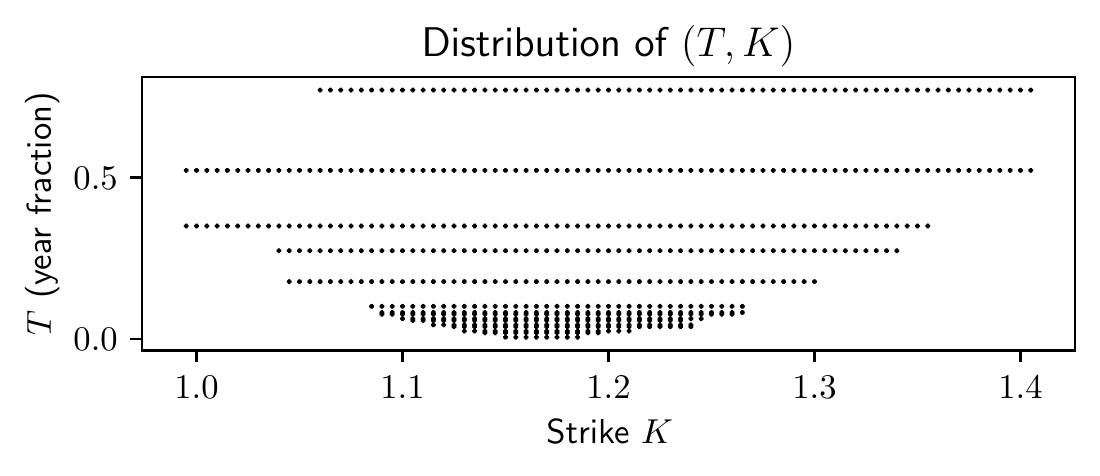}
    \caption{The distribution of $(T,K)$ for all traded EURUSD call options on 31st May, 2018.}
    \label{fig:sampleDataETD}
\end{subfigure}
\\
\begin{subfigure}[b]{\textwidth}
    \includegraphics[scale=0.59]{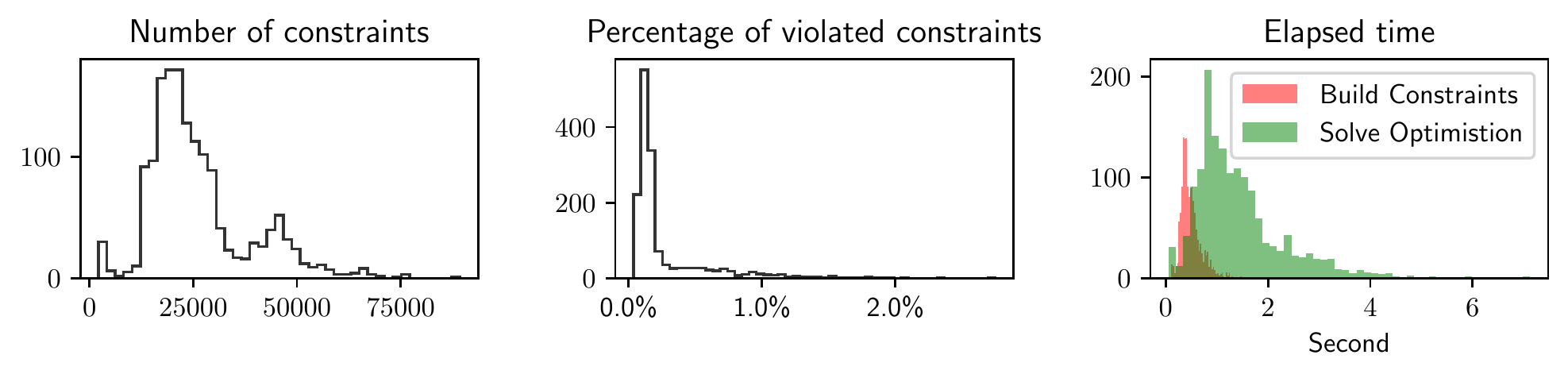}
    \caption{Histograms of various statistics.}
    \label{fig:timeStats_etd_EURUSD}
\end{subfigure} \\
\caption{Statistics for repairing data of CME-listed EURUSD options during the period from 1st January, 2013 to 31st December, 2018.}
\label{fig:timeStats_etd}
\end{figure}

\subsubsection*{Stress testing the $\ell^1$-norm objective repair}
\label{sec:stress_testing}

We test how our repair method works in hypothetical extreme scenarios when there is massive arbitrage. First, we collect arbitrage-free call prices for a day, denote these data by $\mathbf{c} \in \mathbb{R}^N$, and let $\mathcal{I} = \{1,\dots, N\}$ be the set of its indices. Next, we simulate noises and add them to a portion $\lambda \in (0,1]$ of the price data, where we denote $\mathcal{I}_\xi \subset \mathcal{I}$ as the set of indices of those polluted prices. Here, $\mathcal{I}_\xi$ is randomly sampled without replacement such that $|\mathcal{I}_\xi| = \lceil \lambda | \mathcal{I} | \rceil$. Constructing the noise $\bm\xi = (\xi_j) \in \mathbb{R}^N$ by taking  $\xi_j = \zeta_j \mathbbm{1}_{\{j \in \mathcal{I}_\xi\}}$ where $\bm\zeta = (\zeta_j)$ are i.i.d., we then define the noisy price $\tilde{\mathbf{c}} \in \mathbb{R}^N$ by
\begin{equation*}
\tilde{c}_j = c_j e^{\xi_j}, ~\forall 1 \leq j \leq N.
\end{equation*}
The noisy price vector contains arbitrage in general, which we repair by seeking an optimal perturbation $\bm\varepsilon \in \mathbb{R}^N$. The perturbed arbitrage-free price vector is $\hat{\mathbf{c}} = \tilde{\mathbf{c}} + \bm\varepsilon$. An example of $\mathbf{c}$, $\tilde{\mathbf{c}}$, and $\hat{\mathbf{c}}$ is given in Figure \ref{fig:stressTesting}.

\begin{figure}[!h]
\centering
\includegraphics[scale=0.59, trim={1.7cm 1.cm 1.7cm 0cm}, clip]{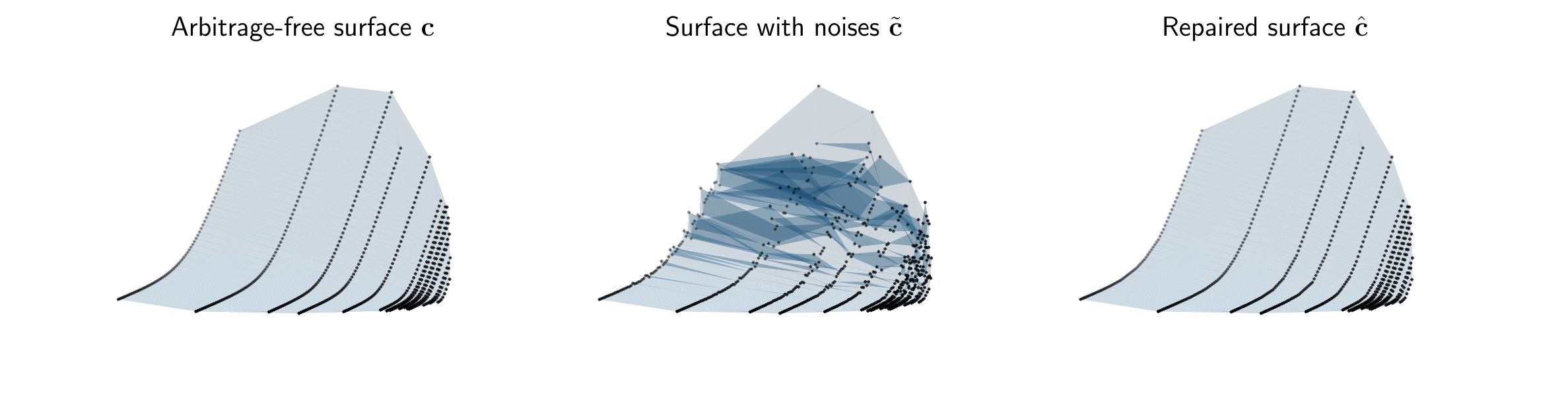}
\caption{An example used for stress testing the repair method. There are $N = 591$ prices in total. The data used are CME-traded EURUSD options' prices on 31st May, 2018.}
\vspace{-2mm}
\label{fig:stressTesting}
\end{figure}

We assess how well $\mathbf{c}$ is recovered by examining the quantities $\ln (\hat{c}_j / c_j)$ and $\hat{\lambda} := \frac{1}{N}\sum_{j=1}^N \mathbbm{1}_{\{\hat{c}_j \neq c_j\}}$. Note that $\ln (\hat{c}_j / c_j) \approx (\hat{c}_j - c_j) / c_j$ (when $\hat{c}_j \approx c_j$). In addition, $\hat{\lambda}$ counts the portion of different prices per call surface. It is unrealistic to expect any repair method to fully recover a price as it is unlikely to know the exact marginal that generates the price. However, given the ground truth that a portion $\lambda$ of the surface prices has been polluted by noise, a desirable data repair method should leave as many unpolluted prices unchanged as possible, i.e. $\hat{\lambda} - \lambda$ should be small.

Assuming Gaussian noises $\bm\zeta \sim \mathcal{N}(\mathbf{0}, \sigma_\xi I)$, we simulate noises $M$ times, and compute the average value of $\hat{\lambda}$ and plot the histograms of $\ln (\hat{c}_j / c_j)$, conditional on non-zero values, as shown in Figure \ref{fig:stressTesting2}. For a fixed noise magnitude $\sigma_\xi$, the gap between $\hat{\lambda}$ and $\lambda$ widens as $\lambda$ increases, i.e. the repair method adjusts a larger number of prices to remove arbitrage if there are more noisy prices. The same observation holds for different values of $\sigma_\xi$, though larger noise magnitude $\sigma_\xi$ results in more arbitrages. Note that taking $\sigma_\xi=1$ and $\lambda=25\%$ already results in, on average, $\hat{\lambda} = 30.80\%$ of the price data being perturbed, an extremely large fraction that has rarely been seen in our data, see Figure \ref{fig:repairPerturbNumber}. Hence, in practice our repair method seems to only perturb a few additional (i.e. $\hat{\lambda} - \lambda \approx 5\%$ for $\lambda = 25\%$) prices to ensure no arbitrage.

\begin{figure}[!h]
\centering
\begin{subfigure}[b]{\textwidth}
    \includegraphics[scale=0.59]{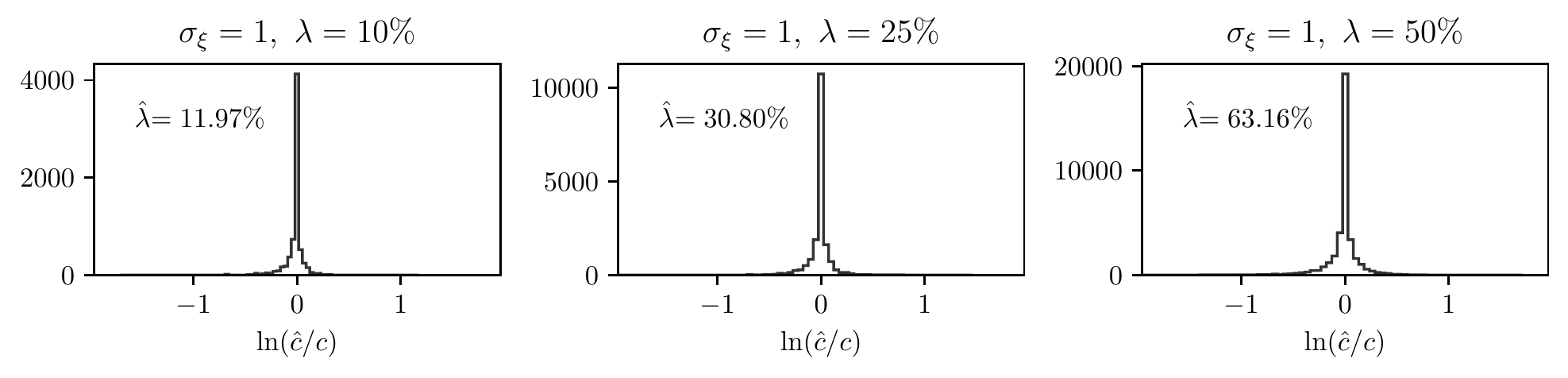}
    \caption{$\sigma_\xi=1$.}
\end{subfigure} \\
\begin{subfigure}[b]{\textwidth}
    \includegraphics[scale=0.59]{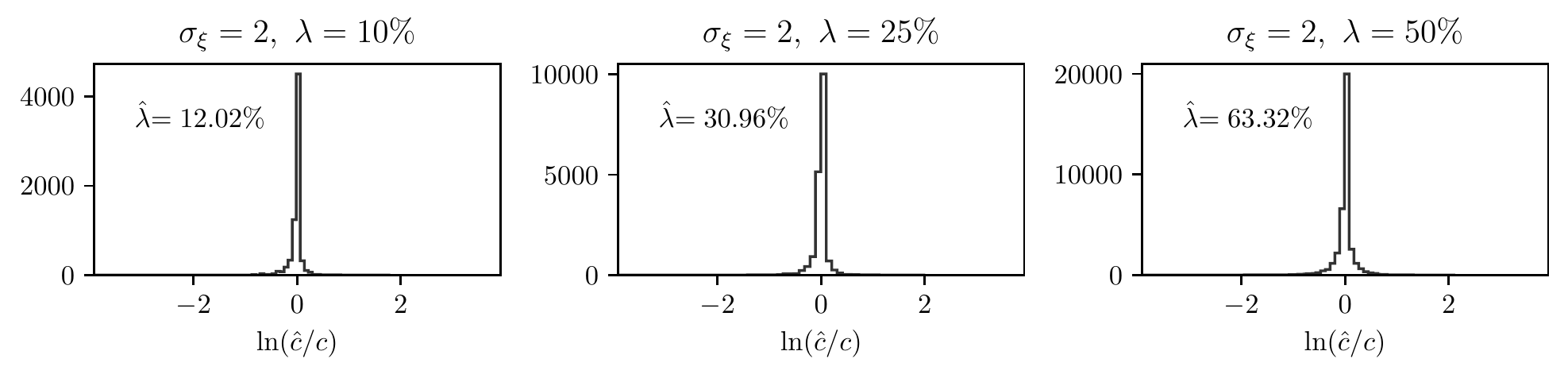}
    \caption{$\sigma_\xi=2$.}
\end{subfigure} \\
\caption{Histograms of $\ln(\hat{c}_j / c_j) \approx (\hat{c}_j - c_j) / c_j$, conditional on non-zero values, computed under differently valued noise simulation parameters $(\lambda, \sigma_\xi)$. We simulate $M=100$ times.}
\label{fig:stressTesting2}
\end{figure}

\subsubsection*{Comparing the objectives: $\ell^1$-norm and $\ell^1$-BA}

The $\ell^1$-BA repair is designed to perturb more prices (larger $N^\varepsilon$) than the $\ell^1$-norm repair does, but fewer of them are effective (smaller $N^{\varepsilon, \delta}$), if possible. To verify this, we apply the $\ell^1$-BA repair method to the same OTC FX option price data. In Figure \ref{fig:statsL1L1BA}, we show the histograms of the difference in these two statistics $N^\varepsilon$ and $N^{\varepsilon, \delta}$ that are produced by the two repair methods.

\begin{figure}[!h]
\centering
\vspace{-4mm}
\includegraphics[scale=0.59]{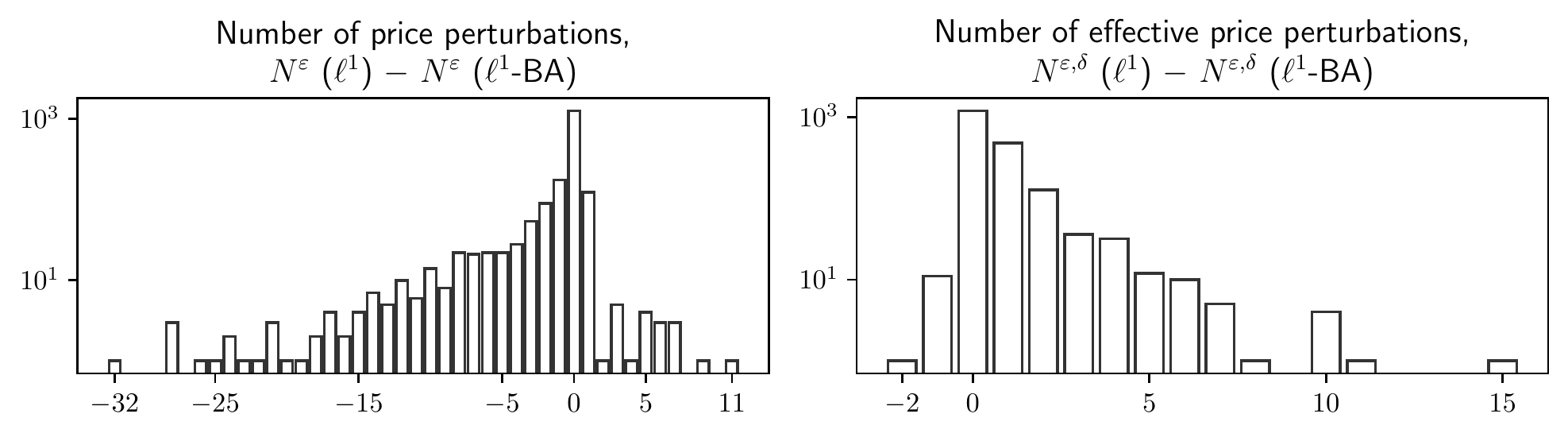}
\caption{Histograms of the difference in $N^\varepsilon$ and $N^{\varepsilon, \delta}$ that are produced by the $\ell^1$-norm repair method and the $\ell^1$-BA repair method. These two methods are separately applied to the same set of OTC FX data as in Figure \ref{fig:detectionTsOtc}, and the histograms are plotted by stacking data of all ten currency pairs and historical dates.}
\label{fig:statsL1L1BA}
\end{figure}

A detailed example showing how the two repair methods work in reality is given in Figure \ref{fig:exampleL1BA}. From left to right, the displayed data are ordered by increasing strikes, grouped by expiry. The light blue areas are confined by bid-ask spread as a percentage of option prices (green lines). We see that ITM and OTM options have wider bid-ask spreads than ATM options do. The $\ell^1$-BA repair method results in fewer effective perturbations. First, there is one less effective perturbation of 1M option prices, at the cost of perturbing a few 2W, 3W and 1M option prices to their bid or ask prices. Second, all four effective perturbations of 4M option prices by the $\ell^1$-norm repair are replaced by six ineffective perturbations of 6M option prices by the $\ell^1$-BA repair.

\begin{figure}[!h]
\centering
\includegraphics[scale=0.59]{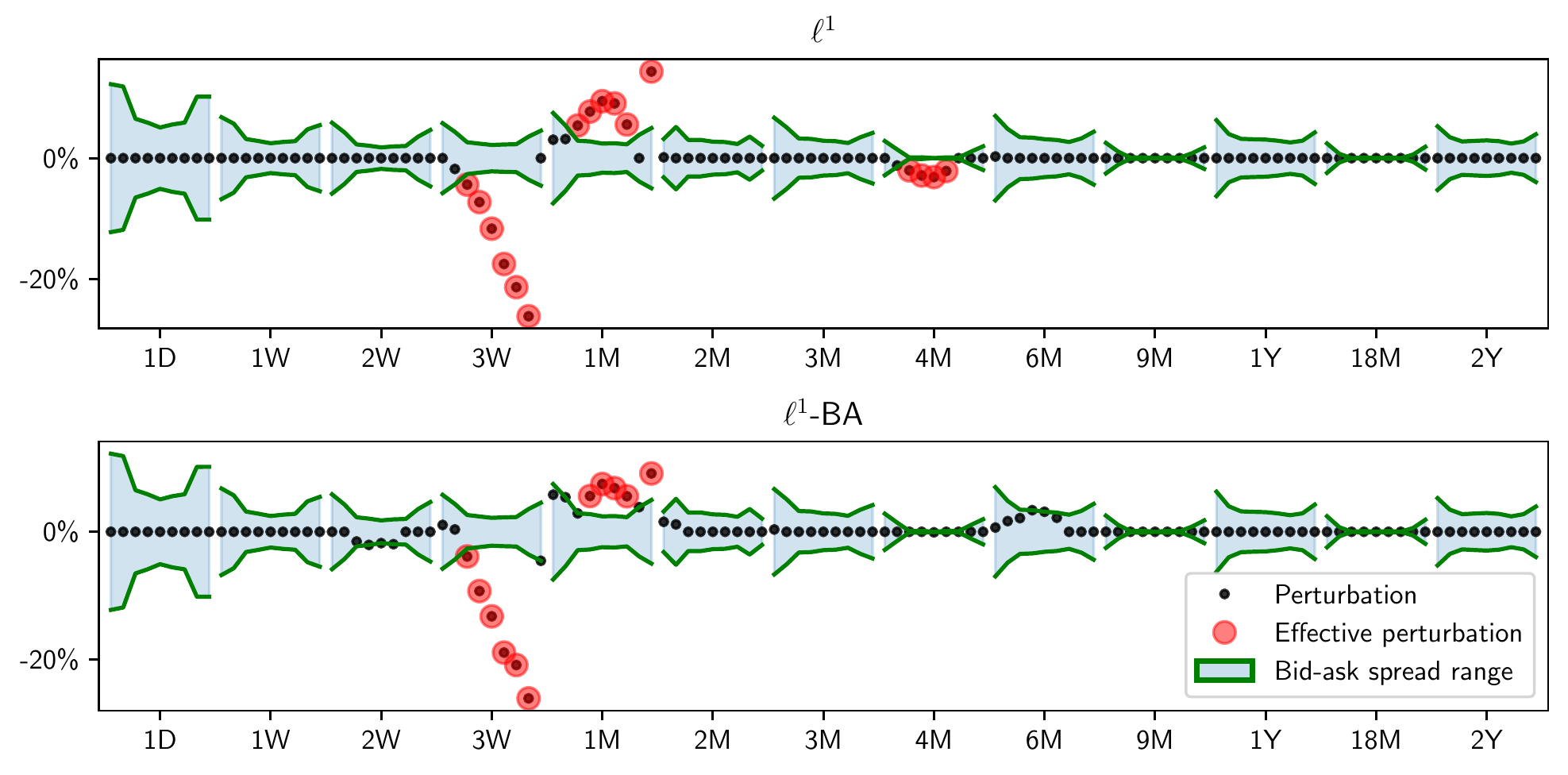}
\caption{Perturbations (as percentages of the raw price data) resulted from the $\ell^1$-norm and the $\ell^1$-BA objectives. Data used are bid, ask and mid prices for OTC-traded USDBRL options on 18th September, 2008.}
\label{fig:exampleL1BA}
\end{figure}

For a given set of prices, if none of the perturbations is effective, then the bid and ask quotes given by the market admit some arbitrage-free prices that fall within the bid-ask price bounds. In contrast, effective perturbations imply the existence of \textit{executable} arbitrages that are exploitable through matching existing bid or ask orders in the market, see Section \ref{sec:include_bid_ask}. In Table \ref{tab:true_arbitrage}, we count the number of days when there is arbitrage in mid-prices ($N^\varepsilon > 0$) and the number of days when there is executable arbitrage ($N^{\varepsilon, \delta} > 0$) in historical data for the four currency pairs that have been seen to have the most occurrences of arbitrages.


\begin{table}[!h]
\centering
\footnotesize
\begin{tabular}{ccccc}
\toprule
Currency pair                 & EURGBP & USDBRL & USDKRW & USDMXN \\
\cmidrule(lr){1-1} \cmidrule(lr){2-5}
\#days when $N^\varepsilon > 0$           & 470    & 708    & 623    & 577    \\
\#days when $N^{\varepsilon, \delta} > 0$ & 285    & 89     & 163    & 144 \\
\bottomrule
\end{tabular}
\caption{Number of days when there is arbitrage in mid-prices ($N^\varepsilon > 0$) and when there is executable arbitrage ($N^{\varepsilon, \delta} > 0$).}
\label{tab:true_arbitrage}
\end{table}

\vspace{-4mm}
\subsection{Improvement to model calibration}

We verify that our repair method improves model calibration with more robust parameter estimates and smaller calibration error.

\subsubsection*{Test framework}

Let $\Theta$ be model parameters. We specify $\Theta = \overline{\Theta}$ and generate model prices $\mathbf{c}$ for call options on a set of expiries and strikes. Then we carry out the following steps $M$ times. For the $m$-th time:
\begin{enumerate}[label=(\arabic*)]
    \item Simulate noises to create synthetic arbitrageable price data $\mathbf{\tilde{c}}^{(m)}$, following the method in Section \ref{sec:stress_testing}. Recall that $\lambda \in (0,1]$ portion of prices are polluted by Gaussian noises of variance $\sigma^2_\xi$.
    \item Repair arbitrage in $\mathbf{\tilde{c}}^{(m)}$ to get arbitrage-free data $\mathbf{\hat{c}}^{(m)}$.
    \item Calibrate model parameters $\Theta$ to $\mathbf{\tilde{c}}^{(m)}$ and $\mathbf{\hat{c}}^{(m)}$ separately\footnote{Note that we must apply exactly the same numerical procedure for these two separate calibrations, i.e. the same optimisation algorithm, terminal criteria, lower and upper bounds, and initial values.}, and get calibrated parameters $\widetilde{\Theta}^{(m)}$ and $\widehat{\Theta}^{(m)}$, respectively. Defining the calibration objective as $G(\Theta; \mathbf{c}) = \sum_{j=1}^N ( c_j^\Theta - c_j )^2$ where $c_j^\Theta$ is the model price for the $j$-th option, we have
    \begin{equation*}
        \widetilde{\Theta}^{(m)} = \argmin_{\Theta} G(\Theta; \mathbf{\tilde{c}}^{(m)}), \quad \widehat{\Theta}^{(m)} = \argmin_{\Theta} G(\Theta; \mathbf{\hat{c}}^{(m)}).
    \end{equation*}
\end{enumerate}

We measure model calibration performance by two metrics, which are (a) the \textit{robustness} defined by variations in the parameter estimates, and (b) the calibration error defined as the square root of the minimal objective value. Since we have parameter estimates $\{\widetilde{\Theta}^{(m)}\}_{1 \leq m \leq M}$ and $\{\widehat{\Theta}^{(m)}\}_{1 \leq m \leq M}$, we can compare the variations in them for assessing robustness. For each $m$, we define the (relative) reduction of calibration error as
\begin{equation*}
    \Delta G^{(m)} = 1 - \sqrt{\frac{G(\widehat{\Theta}^{(m)}; \mathbf{\hat{c}}^{(m)})}{G(\widetilde{\Theta}^{(m)}; \mathbf{\tilde{c}}^{(m)})}}.
\end{equation*}

\subsubsection*{Heston model calibration}

We carry out a test on calibration of the Heston model \cite{Heston1993}. Recall that the Heston model is described by the SDEs with model parameters $\Theta = (\nu_0, \theta, k, \sigma, \rho)$:
\begin{align*}
    \diff S_t & = r_t S_t \diff t + \sqrt{\nu_t} S_t \diff W_t^S, \\
    \diff \nu_t & = k (\theta - \nu_t) \diff t + \sigma \sqrt{\nu_t} \diff W_t^\nu, ~
    \diff \langle W_t^S, W_t^\nu \rangle = \rho \diff t,
\end{align*}
where the Feller condition $2k\theta > \sigma^2$ is sufficient to ensure strict positivity of the instantaneous variance process $\nu_t$.

We specify a typical set of expiries and strikes that is observed on a day in the OTC market, such as the one shown in Figure \ref{fig:sampleDataOTC}. Other simulation parameters and ground truth model parameters\footnote{Heston model parameters are chosen as those that reproduce a typical call price surface for USDBRL options. Noise simulation parameters $\lambda$ and $\sigma_\xi$ are chosen to mimic severe but not extreme arbitrage scenarios (measured by the fraction of perturbed prices by the repair method) observed in real world data.} are listed in Table \ref{tab:params_values}.
\begin{table}[!h]
\centering
\footnotesize
\begin{tabular}{cccccccccc}
\toprule
\multicolumn{1}{l}{} & \multicolumn{5}{c}{Heston model}              & \multicolumn{4}{c}{Simulation}       \\ \cmidrule(lr){1-1} \cmidrule(lr){2-6} \cmidrule(lr){7-10}
Parameter            & $\nu_0$ & $\theta$ & $k$  & $\sigma$ & $\rho$ & $N$ & $M$ & $\lambda$ & $\sigma_\xi$ \\
Value                & 0.003   & 0.008    & 2.32 & 0.38     & 0.36   & 117 & 500 & 0.25      & 0.1 or 1 \\       
\bottomrule
\end{tabular}
\caption{Parameter values}
\label{tab:params_values}
\end{table}

\begin{figure}[!h]
\centering
\begin{subfigure}[b]{\textwidth}
    \includegraphics[scale=0.59]{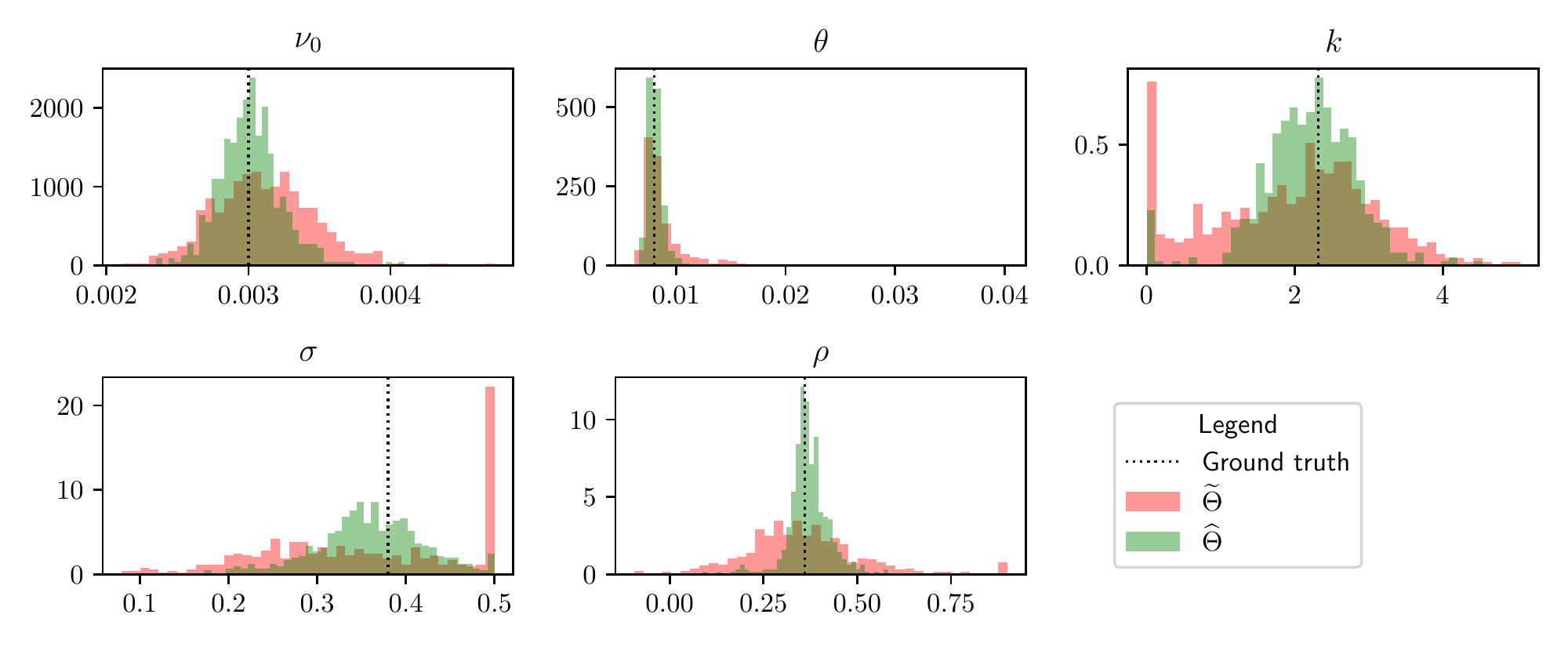}
    \vspace{-2mm}
    \caption{$\sigma_\xi=0.1$.}
\end{subfigure} \\
\begin{subfigure}[b]{\textwidth}
    \includegraphics[scale=0.59]{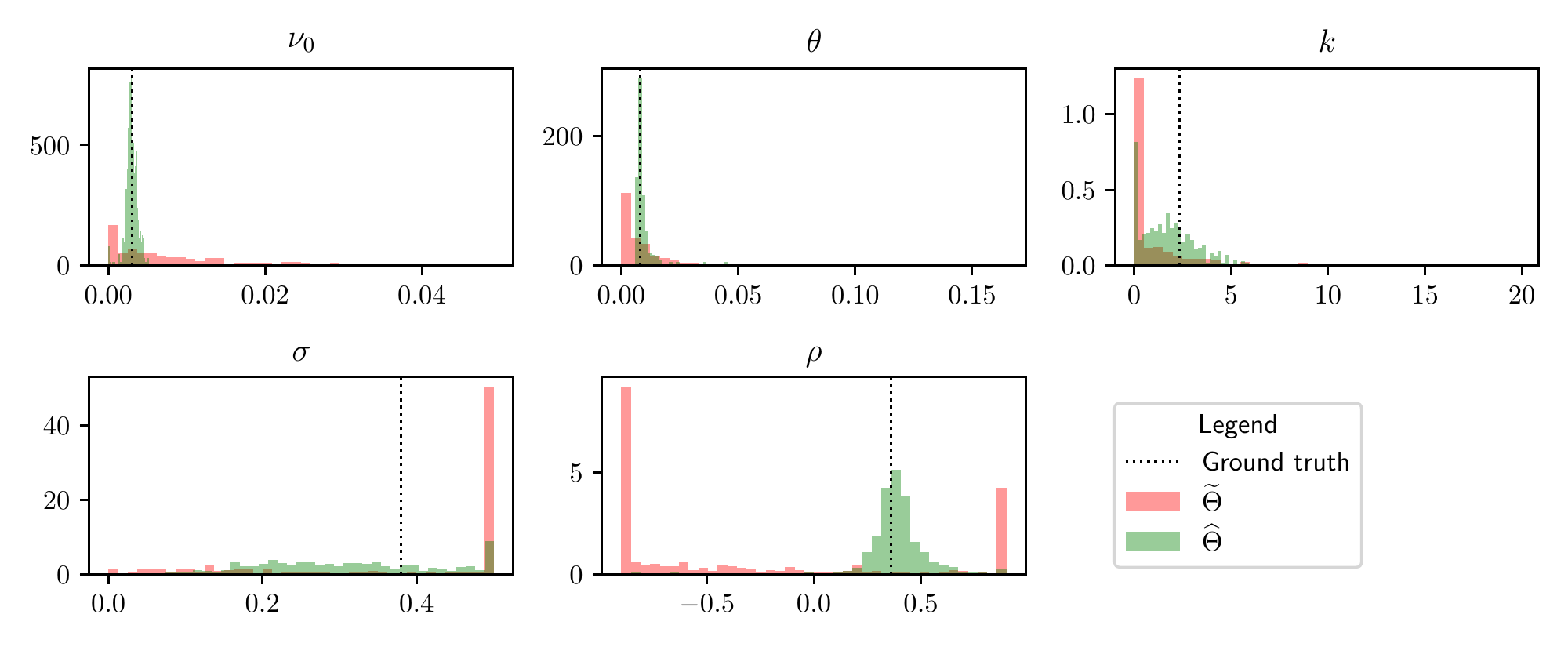}
    \vspace{-2mm}
    \caption{$\sigma_\xi=1$.}
\end{subfigure} \\
\caption{Sample (normed) histograms of $\widetilde{\Theta}$ and $\widehat{\Theta}$, where $\Theta = (\nu_0,  \theta, k, \sigma, \rho)$.}
\vspace{-4mm}
\label{fig:synHestonParams}
\end{figure}

Next, we follow the test framework and evaluate $\widetilde{\Theta}^{(m)}$, $\widehat{\Theta}^{(m)}$, $\Delta G^{(m)}$ and $\Delta t^{(m)}$ for $m = 1,\dots,M$. In Figure \ref{fig:synHestonParams}, we plot and compare the normed histograms of calibrated Heston parameters $\widetilde{\Theta}$ (using noisy data) and  $\widehat{\Theta}$ (using repaired data) given different choices of $\sigma_\xi$. The ground truth parameter values are also indicated by vertical dotted lines. Repairing data does make the model calibration more robust, as supported by two types of evidence. First, there are apparently more variations in $\widetilde{\Theta}$ than in $\widehat{\Theta}$. Second, $\widetilde{\Theta}$ tends to hit the bounds set in the numerical optimisation procedure (e.g. 0 for $k$, 0.5 for $\sigma$, 1 for $\rho$) much more often than $\widehat{\Theta}$ does. Moreover, when the price data are more noisy (larger $\sigma_\xi$) so that more prices with arbitrage are present, the robustness improvement of model calibration by the repair method becomes more significant.

In Figure \ref{fig:synHestonReduction}, we plot the histograms of $\Delta G$ and indicate their means by vertical dotted lines. Repairing arbitrage in data reduces the calibration errors in all $M$ simulations with no exception. Moreover, the more noisy the raw data are, the arbitrage repair method reduces relatively more calibration errors. On average, repairing data can reduce the calibration error by more than $70\%$ for $\sigma_\xi = 0.1$, and more than $95\%$ for $\sigma_\xi = 1$.
\begin{figure}[!h]
    \centering
    \vspace{-4mm}
    \includegraphics[scale=.59]{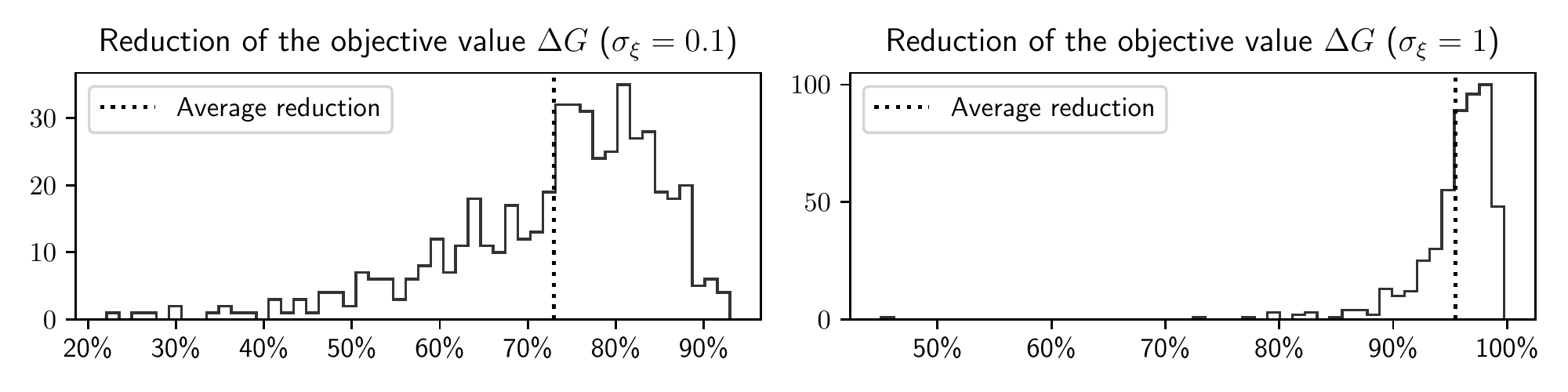}
    \caption{Sample histograms of (relative) reductions in calibration error.}
    \vspace{-4mm}
    \label{fig:synHestonReduction}
\end{figure}

Hence, for model calibration task, there is more benefit of repairing data by removing arbitrage when the data contain larger noise.

\subsection{Identifying intra-day executable arbitrage}

We can use the $\ell^1$-BA repair method on order book data for identifying executable arbitrage. An example is given in Figure \ref{fig:hf}. We collect the order book data for all E-mini S\&P 500 monthly European call options from 12:00 ET to 16:10 ET on 12th June, 2020.


\begin{figure}[!h]
    \centering
    \includegraphics[scale=.59]{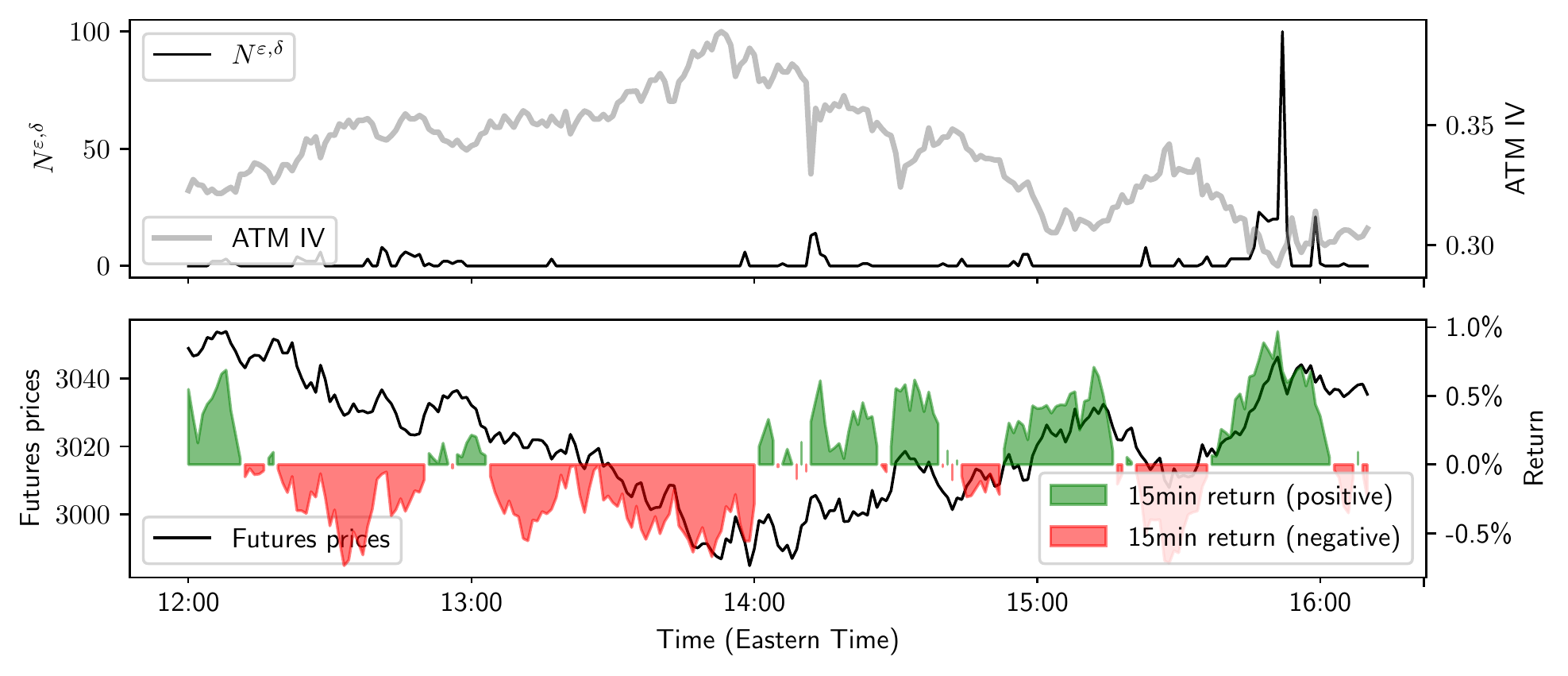}
    \caption{\textit{Top} -- The formation and disappearance of intra-day executable arbitrage opportunities in the E-mini S\&P 500 monthly European call option market on 12th June, 2020. \textit{Bottom} -- front-month futures' prices and 15-minute return.}
    \label{fig:hf}
\end{figure}

We extract the active best ask and best bid prices for all quoted call options from the order book at the end of every minute. Then we compute mid prices, apply the $\ell^1$-BA repair method to the mid prices, and count the number of effective perturbations $N^{\varepsilon, \delta}$. Recall from Section \ref{sec:exe_arbitrage} that, given $\delta_0$ small, there exists executable arbitrage if the $\ell^1$-BA repair method results in effective perturbations. In the top plot of Figure \ref{fig:hf}, the black line gives $N^{\varepsilon, \delta}$ over time, while we also indicate the ATM implied volatility of the front-month option (which would expire on 19th June) by the grey line. The bottom plot gives the prices and the 15-minute returns of the front-month futures contract. The downward trend of the futures market was inverted around 14:00 ET, after when the implied volatility also falls gradually from its peak. 

There is a large spike of $N^{\varepsilon, \delta}$ at around 15:52 ET, a few minutes before the close of the S\&P 500 index market at 16:00 ET. This spike coincided with rallies in the futures market, while the IV maintained its relatively low level. There are some clusters of smaller spikes of $N^{\varepsilon, \delta}$ outside of the US trading hours. Apart from this arbitrage outbreak preceding the close of the underlying market, which lasted for around 15 minutes, there seems to be trivial executable arbitrage during the rest of the afternoon trading hours, even when the market underwent regime switch (from downward trend to upward trend) at around 14:00 ET. 

\appendix
\section{Localisation of static arbitrage constraints}
\label{appendix:localisationConstraints}

To localise calendar butterfly constraints, we use a sequential build-up of local constraints from the shortest expiry to the longest expiry. Define $\mathcal{D}_{i} := \{ (k_j^i, c_j^i) : 1 \leq j \leq n_i\}$ as price data for options of expiry $T_i \in \mathcal{T}^e$. Given arbitrage-free $\mathcal{D}_{i^*}$, we construct constraints such that adding price data of any longer-expiry option should not introduce arbitrage. This is done locally in two steps, where we scan a neighbourhood of each $k_j^{i^*}$. The first step, we call ``absolute location convexity'' \hyperlink{c61}{C6.1}, finds constraints ensuring that adding \textit{any single} data point $(k^i_j, c^i_j)$ where $i>i^*$ will not introduce arbitrage. In Figure \ref{fig:absConvexity} we indicate the regions where adding a single data point will not introduce arbitrage for four types of strike neighbourhood. In the second step ``relative location convexity'' \hyperlink{c62}{C6.2}, we find constraints making sure that adding \textit{all} data points $(k^i_j, c^i_j)$ where $i>i^*$ will not introduce arbitrage for two types of strike neighbourhood. As shown in Figure \ref{fig:relConvexity}, if we draw line segments by linking each added point and the reference point $o = (k^{i^*}_{j^*}, c^{i^*}_{j^*})$, we require the slope of any line on the left $\{l_i\}$ to be not greater than the slope of any line on the right $\{r_j\}$.

\begin{figure}[!h]
\centering
  	\begin{subfigure}[b]{0.24\linewidth}
    	\begin{tikzpicture}[scale=1][
   	thick,
    >=stealth']
  	
  	\coordinate (O) at (0,0);

 	\draw[->] (-0.1,0) -- (2.6,0) coordinate[label = {below:$k$}] (xmax);
  	\draw[->] (0,-0.1) -- (0,3) coordinate[label = {right:$c$}] (ymax);	
	
	\node (a) [circle, fill=black, minimum size=3pt, inner sep=0pt] at (0,2.3) {};
	\node (b) [circle, fill=black, minimum size=3pt, inner sep=0pt] at (.5,1.1) {};
	\node (c) [circle, fill=black, minimum size=3pt, inner sep=0pt] at (1.3,.6) {};
	\node (d) [circle, fill=black, minimum size=3pt, inner sep=0pt] at (2.3,.4) {};
	
	\draw[gray] (a)--(b);
	\draw[gray] (b)--(c);
	\draw[gray] (c)--(d);
	
	
	\coordinate (b1) at (b);
	\coordinate (c1) at (c);
	\coordinate (i) at (intersection of O--(0,1) and b1--c1);
	\path let \p1=(b1) in coordinate (b2) at (\x1, 2.8);
 	\fill[green!50, fill opacity=.5] (i) -- (0,2.8) -- (b2) -- (b1) -- cycle;
 	
 	\draw[dashed] (b1) -- (b1 |- O) node[label = {below:$k^{i^*}_{j^*}$}] {};
\end{tikzpicture}
    \caption{$j^*=1$} \label{fig:cvxWeak1}  
  	\end{subfigure}
  	\begin{subfigure}[b]{0.24\linewidth}
	\begin{tikzpicture}[scale=1][
   	thick,
    >=stealth']
  	
  	\coordinate (O) at (0,0);

 	\draw[->] (-0.1,0) -- (2.6,0) coordinate[label = {below:$k$}] (xmax);
  	\draw[->] (0,-0.1) -- (0,3) coordinate[label = {right:$c$}] (ymax);	
	
	\node (a) [circle, fill=black, minimum size=3pt, inner sep=0pt] at (0,2.3) {};
	\node (b) [circle, fill=black, minimum size=3pt, inner sep=0pt] at (.5,1.1) {};
	\node (c) [circle, fill=black, minimum size=3pt, inner sep=0pt] at (1.3,.6) {};
	\node (d) [circle, fill=black, minimum size=3pt, inner sep=0pt] at (2.3,.4) {};
	
	\draw[gray] (a)--(b);
	\draw[gray] (b)--(c);
	\draw[gray] (c)--(d);
	
	\coordinate (a1) at (a);
	\coordinate (b1) at (b);
	\coordinate (c1) at (c);
	\coordinate (d1) at (d);
	\coordinate (i) at (intersection of a1--b1 and c1--d1);
	\path let \p1=(b1) in coordinate (b2) at (\x1, 2.8);
	\path let \p1=(c1) in coordinate (c2) at (\x1, 2.8);
 	\fill[green!50, fill opacity=.5] (b1) -- (b2) -- (c2) -- (c1) -- (i) -- cycle;
 	
 	\draw[dashed] (b1) -- (b1 |- O) node[label = {below:$k^{i^*}_{j^*-1}$}] {};
 	\draw[dashed] (c1) -- (c1 |- O) node[label = {below:$k^{i^*}_{j^*}$}] {};
 	\draw[dashed] (d1) -- (d1 |- O) node[label = {below:$k^{i^*}_{j^*+1}$}] {};
 	
\end{tikzpicture}
\caption{$j^* \in [2, n_{i^*}-1]$} \label{fig:cvxWeak2}
  	\end{subfigure}
  	\begin{subfigure}[b]{0.24\linewidth}
    	\begin{tikzpicture}[scale=1][
   	thick,
    >=stealth']
  	
  	\coordinate (O) at (0,0);

 	\draw[->] (-0.1,0) -- (2.6,0) coordinate[label = {below:$k$}] (xmax);
  	\draw[->] (0,-0.1) -- (0,3) coordinate[label = {right:$c$}] (ymax);	
	
	\node (a) [circle, fill=black, minimum size=3pt, inner sep=0pt] at (0,2.3) {};
	\node (b) [circle, fill=black, minimum size=3pt, inner sep=0pt] at (.5,1.1) {};
	\node (c) [circle, fill=black, minimum size=3pt, inner sep=0pt] at (1.3,.6) {};
	\node (d) [circle, fill=black, minimum size=3pt, inner sep=0pt] at (2.3,.4) {};
	
	\draw[gray] (a)--(b);
	\draw[gray] (b)--(c);
	\draw[gray] (c)--(d);
	
	\coordinate (a1) at (a);
	\coordinate (b1) at (b);
	\coordinate (c1) at (c);
	\coordinate (d1) at (d);
	\path let \p1=(d1) in coordinate (d2) at (0, \y1);
	\coordinate (i) at (intersection of b1--c1 and d2--d1);
	\path let \p1=(c1) in coordinate (c2) at (\x1, 2.8);
	\path let \p1=(d1) in coordinate (d3) at (\x1, 2.8);
 	\fill[green!50, fill opacity=.5] (c1) -- (c2) -- (d3) -- (d1) -- (i) -- cycle;
 	
 	\draw[dashed] (c1) -- (c1 |- O) node[label = {below:$k^{i^*}_{j^*-1}$}] {};
 	\draw[dashed] (d1) -- (d1 |- O) node[label = {below:$k^{i^*}_{j^*}$}] {};
 	
\end{tikzpicture}
    \caption{$j^*=n_{i^*}$} \label{fig:cvxWeak3}
  	\end{subfigure}
  	\begin{subfigure}[b]{0.24\linewidth}
    	\begin{tikzpicture}[scale=1][
   	thick,
    >=stealth']
  	
  	\coordinate (O) at (0,0);

 	\draw[->] (-0.1,0) -- (3,0) coordinate[label = {below:$k$}] (xmax);
  	\draw[->] (0,-0.1) -- (0,3) coordinate[label = {right:$c$}] (ymax);	
	
	\node (a) [circle, fill=black, minimum size=3pt, inner sep=0pt] at (0,2.3) {};
	\node (b) [circle, fill=black, minimum size=3pt, inner sep=0pt] at (.5,1.1) {};
	\node (c) [circle, fill=black, minimum size=3pt, inner sep=0pt] at (1.3,.6) {};
	\node (d) [circle, fill=black, minimum size=3pt, inner sep=0pt] at (2.3,.4) {};
	
	\draw[gray] (a)--(b);
	\draw[gray] (b)--(c);
	\draw[gray] (c)--(d);
	
	\coordinate (a1) at (a);
	\coordinate (b1) at (b);
	\coordinate (c1) at (c);
	\coordinate (d1) at (d);
	\coordinate (e1) at (2.8,0);
	\coordinate (e2) at (2.8,1);
	\coordinate (i) at (intersection of c1--d1 and e1--e2);
	\path let \p1=(i) in coordinate (i1) at (\x1, 2.8);
	\path let \p1=(d1) in coordinate (d2) at (\x1, 2.8);
 	\fill[green!50, fill opacity=.5] (d1) -- (d2) -- (i1) -- (i) -- cycle;
 	
 	\draw[dashed] (c1) -- (c1 |- O) node[label = {below:$k^{i^*}_{j^*-1}$}] {};
 	\draw[dashed] (d1) -- (d1 |- O) node[label = {below:$k^{i^*}_{j^*}$}] {};
 	
\end{tikzpicture} 
    \caption{$j^*=n_{i^*}$, $k>k^{i^*}_{j^*}$} \label{fig:cvxWeak4}
  	\end{subfigure}
\caption{Absolute location convexity constraint, discussed in four cases. Points falling in the green region satisfy the absolute location convexity constraint.}
\label{fig:absConvexity}
\end{figure}
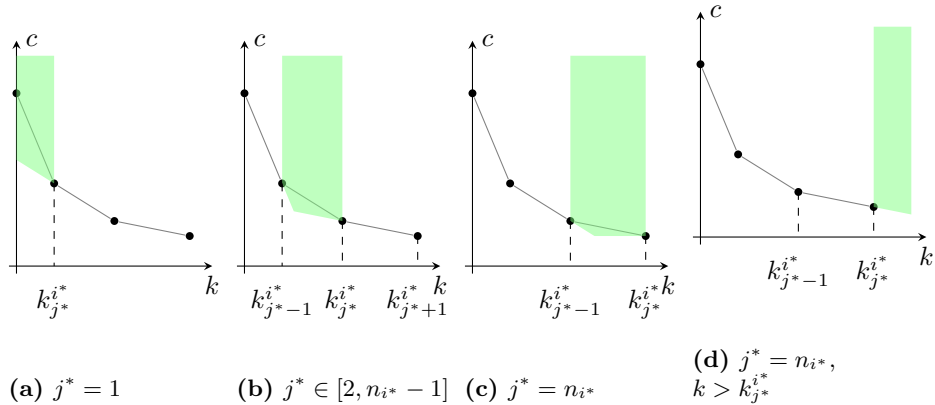

\begin{figure}[!h]  
\centering 
  \begin{subfigure}[b]{0.49\linewidth}
    \begin{tikzpicture}[scale=1][
   	thick,
    >=stealth']
  	
  	\coordinate (O) at (0,0);

 	\draw[->] (-0.1,0) -- (5.6,0) coordinate[label = {below:$k$}] (xmax);
  	\draw[->] (0,-0.1) -- (0,4) coordinate[label = {right:$c$}] (ymax);	
	
	\node (a) [circle, fill=black, minimum size=3pt, inner sep=0pt] at (0,3.5) {};
	\node (b) [circle, fill=black, minimum size=3pt, inner sep=0pt] at (.3,2.5) {};
	\node (c) [circle, fill=black, minimum size=3pt, inner sep=0pt, label={left:$o$}] at (1.8,1.2) {};
	\node (d) [circle, fill=black, minimum size=3pt, inner sep=0pt] at (4,.5) {};
	
	\draw[gray] (a)--(b);
	\draw[gray] (b)--(c);
	\draw[gray] (c)--(d);
	
	\coordinate (a1) at (a);
	\coordinate (b1) at (b);
	\coordinate (c1) at (c);
	\coordinate (d1) at (d);
	\path let \p1=(d1) in coordinate (d2) at (0, \y1);
	\coordinate (i1) at (intersection of a1--b1 and c1--d1);
	\coordinate (i2) at (intersection of b1--c1 and d1--d2);
	\path let \p1=(b1) in coordinate (b2) at (\x1, 3.8);
	\path let \p1=(d1) in coordinate (d3) at (\x1, 3.8);
	\fill[green!50, fill opacity=.5] (b1) -- (b2) -- (d3) -- (d1) -- (i2) -- (c1) -- (i1) -- cycle;
 	
	\node (l1) [circle, fill=red, minimum size=3pt, inner sep=0pt, label={left:$l_2$}] at (1,1.6) {};
	\node (l2) [circle, fill=red, minimum size=3pt, inner sep=0pt, label={left:$l_1$}] at (.9,3) {};
	\node (l3) [circle, fill=red, minimum size=3pt, inner sep=0pt] at (.7,2) {};
	
	\node (r1) [circle, fill=blue, minimum size=3pt, inner sep=0pt, label={right:$r_1$}] at (2.5,.9) {};
	\node (r2) [circle, fill=blue, minimum size=3pt, inner sep=0pt] at (2.3,3.5) {};
	\node (r3) [circle, fill=blue, minimum size=3pt, inner sep=0pt, label={right:$r_2$}] at (2.8,.55) {};
	\node (r4) [circle, fill=blue, minimum size=3pt, inner sep=0pt] at (3.8,1.5) {};
	
	\draw[red!60, densely dotted] (l1)--(c);
	\draw[red!60, densely dotted] (l2)--(c);
	\draw[red!60, densely dotted] (l3)--(c);
	
	\draw[blue!60, densely dotted] (r1)--(c);
	\draw[blue!60, densely dotted] (r2)--(c);
	\draw[blue!60, densely dotted] (r3)--(c);
	\draw[blue!60, densely dotted] (r4)--(c);
 	
 	\draw[dashed] (c1) -- (c1 |- O) node[label = {below:$k^{i^*}_{j^*}$}] {};
\end{tikzpicture}
    \caption{$j^* \in [1, n_{i^*}-1]$} \label{fig:cvxStrong1}  
  \end{subfigure}
  \begin{subfigure}[b]{0.49\linewidth}
	\begin{tikzpicture}[scale=1][
   	thick,
    >=stealth']
  	
  	\coordinate (O) at (0,0);

 	\draw[->] (-0.1,0) -- (6,0) coordinate[label = {below:$k$}] (xmax);
  	\draw[->] (0,-0.1) -- (0,4) coordinate[label = {right:$c$}] (ymax);	
	
	\node (a) [circle, fill=black, minimum size=3pt, inner sep=0pt] at (0,3.5) {};
	\node (b) [circle, fill=black, minimum size=3pt, inner sep=0pt] at (.3,2.5) {};
	\node (c) [circle, fill=black, minimum size=3pt, inner sep=0pt] at (1.8,1.2) {};
	\node (d) [circle, fill=black, minimum size=3pt, inner sep=0pt, label={below:$o$}] at (4,.5) {};
	
	\draw[gray] (a)--(b);
	\draw[gray] (b)--(c);
	\draw[gray] (c)--(d);
	
	\coordinate (a1) at (a);
	\coordinate (b1) at (b);
	\coordinate (c1) at (c);
	\coordinate (d1) at (d);
	\path let \p1=(d1) in coordinate (d2) at (0, \y1);
	\coordinate (i1) at (intersection of b1--c1 and d1--d2);
	\path let \p1=(c1) in coordinate (c2) at (\x1, 3.8);
	\path let \p1=(d1) in coordinate (d3) at (\x1, 3.8);
	\fill[green!50, fill opacity=.5] (c1) -- (c2) -- (d3) -- (d1) -- (i1) -- cycle;
 	
	\coordinate (e1) at (5.5,0);
	\coordinate (e2) at (5.5,1);
	\coordinate (i2) at (intersection of c1--d1 and e1--e2);
	\path let \p1=(i2) in coordinate (i3) at (\x1, 3.8);
	\fill[green!50, fill opacity=.5] (d1) -- (d3) -- (i3) -- (i2) -- cycle;	
 	
	\node (l1) [circle, fill=red, minimum size=3pt, inner sep=0pt, label={left:$l_1$}] at (2.5,.9) {};
	\node (l2) [circle, fill=red, minimum size=3pt, inner sep=0pt] at (2.3,3.5) {};
	\node (l3) [circle, fill=red, minimum size=3pt, inner sep=0pt, label={left:$l_2$}] at (3,.6) {};
	\node (l4) [circle, fill=red, minimum size=3pt, inner sep=0pt] at (3.8,1.5) {};
	
	\node (r1) [circle, fill=blue, minimum size=3pt, inner sep=0pt, label={right:$r_1$}] at (4.5,.9) {};
	\node (r2) [circle, fill=blue, minimum size=3pt, inner sep=0pt, label={right:$r_2$}] at (4.7,.35) {};
	\node (r3) [circle, fill=blue, minimum size=3pt, inner sep=0pt] at (5.3,.5) {};
	
	\draw[red!60, densely dotted] (l1)--(d);
	\draw[red!60, densely dotted] (l2)--(d);
	\draw[red!60, densely dotted] (l3)--(d);
	\draw[red!60, densely dotted] (l4)--(d);
	
	\draw[blue!60, densely dotted] (r1)--(d);
	\draw[blue!60, densely dotted] (r2)--(d);
	\draw[blue!60, densely dotted] (r3)--(d);
 	
 	\draw[dashed] (d1) -- (d1 |- O) node[label = {below:$k^{i^*}_{j^*}$}] {};
\end{tikzpicture}
	\caption{$j^* = n_{i^*}$} \label{fig:cvxStrong2}
  \end{subfigure}
\caption{Relative location convexity constraint, discussed in two cases. Points within the green region satisfy the absolute location convexity constraint.}
\label{fig:relConvexity}
\end{figure}
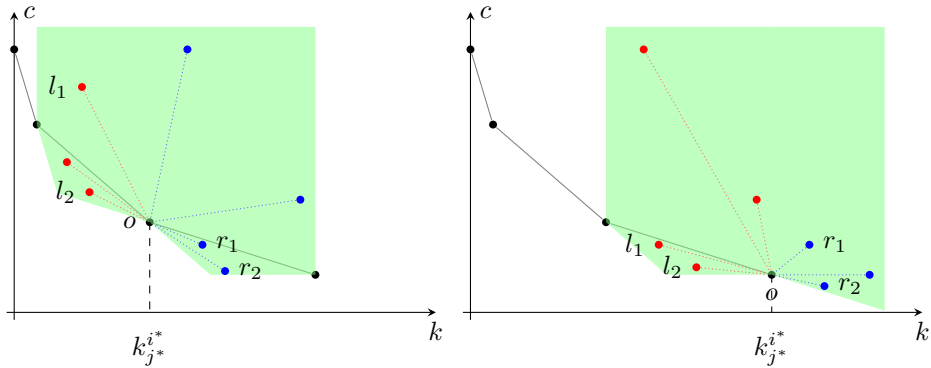

\section{Proof of Proposition \ref{prop:reducedConstraint}}
\label{appendix:proofReducedConstraint}

We prove Proposition \ref{prop:reducedConstraint} by establishing Lemma \ref{lemma:arbitrage1}, \ref{lemma:arbitrage2} and \ref{lemma:arbitrage3}.

\begin{lemma}\label{lemma:arbitrage1}
If \hyperlink{c1}{C1}, \hyperlink{c2}{C2} and \hyperlink{c3}{C3} are satisfied, then all outrights, vertical spreads and vertical butterflies are non-negative. In addition, all test vertical spreads are not greater than 1.
\end{lemma}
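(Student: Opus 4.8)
The plan is to fix one expiry and reduce the whole statement to an elementary fact about the consecutive secant slopes of the price curve $j\mapsto c^i_j$, using that every non-consecutive secant slope is a strictly-positively-weighted average of consecutive ones. Fix $i$ and set $w_l:=k^i_l-k^i_{l-1}>0$ (strikes are strictly increasing) and $\beta_l:=\beta(i,l;i,l-1)=(c^i_l-c^i_{l-1})/w_l$ for $l=1,\dots,n_i$. Unwinding the definitions, \hyperlink{c2}{C2} and \hyperlink{c3}{C3} say precisely that $\text{VS}^i_{l,l-1}=-\beta_l\ge 0$, that $\text{VS}^i_{1,0}=-\beta_1\le 1$, and that $\text{VB}^i_{l,l-1,l+1}=\beta_{l+1}-\beta_l\ge 0$; equivalently,
\[ -1\le \beta_1\le \beta_2\le\cdots\le \beta_{n_i}\le 0 . \]

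The key step is the telescoping identity: for any $0\le j_2<j_1\le n_i$, since numerator and denominator both telescope,
\[ \beta(i,j_1;i,j_2)=\frac{c^i_{j_1}-c^i_{j_2}}{k^i_{j_1}-k^i_{j_2}}=\frac{\sum_{l=j_2+1}^{j_1}w_l\,\beta_l}{\sum_{l=j_2+1}^{j_1}w_l}, \]
so $\beta(i,j_1;i,j_2)$ is trapped between $\min_{j_2<l\le j_1}\beta_l$ and $\max_{j_2<l\le j_1}\beta_l$, hence lies in $[-1,0]$. Therefore every test vertical spread satisfies $0\le \text{VS}^i_{j_1,j_2}=-\beta(i,j_1;i,j_2)\le 1$, which simultaneously gives non-negativity of all vertical spreads and the ``$\le 1$'' bound.

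For outrights I would use only $\text{VS}^i_{l,l-1}\ge 0$: it forces $1=c^i_0\ge c^i_1\ge\cdots\ge c^i_{n_i}$, and \hyperlink{c1}{C1} gives $c^i_{n_i}\ge 0$, so $c^i_j\ge 0$ for every $j$. For vertical butterflies, take $j_1<j<j_2$ at the same expiry and write $\text{VB}^i_{j,j_1,j_2}=\beta(i,j_2;i,j)-\beta(i,j;i,j_1)$; by the identity above the subtrahend is at most $\max_{j_1<l\le j}\beta_l=\beta_j$ and the minuend is at least $\min_{j<l\le j_2}\beta_l=\beta_{j+1}$, and $\beta_j\le\beta_{j+1}$, so $\text{VB}^i_{j,j_1,j_2}\ge\beta_{j+1}-\beta_j\ge 0$. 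Since $i$ was arbitrary this finishes the argument. The only substantive step is the convex-combination representation, and its sole subtlety is that the weights $w_l$ must be strictly positive so that the weighted mean genuinely lies between the extreme consecutive slopes — which holds because $k^i_1<\cdots<k^i_{n_i}$; everything else is bookkeeping with a non-decreasing sequence pinned to $[-1,0]$.
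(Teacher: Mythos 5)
Your proof is correct, and it reaches the same structural endpoint as the paper's — namely that the consecutive secant slopes $\beta_1\le\cdots\le\beta_{n_i}$ form a non-decreasing sequence confined to $[-1,0]$, and that general secant slopes are controlled by consecutive ones — but the key technical device is different. The paper establishes the control of non-consecutive slopes by an induction (its claims (8a)--(8b)), propagating inequalities of the form $\beta(i,j;i,j_1)\le\beta(i,j;i,j_1+1)\le\cdots\le\beta(i,j;i,j-1)$ one step at a time via repeated applications of \hyperlink{c3}{C3}, and then handles the bound $\mathrm{VS}\le 1$ by a case split on $j_1>1$, $j_1=1$, $j_1=0$. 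You instead use the telescoping identity $\beta(i,j_1;i,j_2)=\bigl(\sum_l w_l\beta_l\bigr)/\bigl(\sum_l w_l\bigr)$, which exhibits every secant slope as a strictly convex combination of consecutive slopes; the containment in $[-1,0]$, the monotonicity bounds $\beta(i,j;i,j_1)\le\beta_j$ and $\beta(i,j_2;i,j)\ge\beta_{j+1}$, and the $\le 1$ bound then all fall out at once with no induction and no case analysis. Your route is shorter and arguably more transparent; the paper's inductive claims have the minor advantage of being reused almost verbatim in the proof of Lemma 3 (the calendar butterfly case), where the points being compared no longer all lie on one expiry curve and the clean convex-combination identity is not directly available. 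Your identification of the one genuine hypothesis needed — strict positivity of the weights $w_l$, i.e.\ strictly increasing strikes — is also correct.
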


\begin{proof}
We consider the prices of call options with the same expiry $T_i$ where $i \in [1,m]$.

First, we prove that any vertical spread is non-negative, i.e. $\forall 0 \leq j_1 < j_2 \leq n_i$, $c^i_{j_1} \geq c^i_{j_2}$. This is true by the vertical spread constraint \hyperlink{c2}{C2}, as $c^i_{j_1} \geq c^i_{j_1+1} \geq \cdots \geq c^i_{j_2}$.

Second, we show that all outrights are non-negative, i.e. $\forall j \in [0, n_i]$, $c^i_j\geq 0$. Given the outright constraint \hyperlink{c1}{C1} and that any vertical spread is non-negative, we have $c^i_j \geq c^i_{n_i} \geq 0$.

Next, we show that any vertical butterfly is non-negative, i.e. $\forall 0 \leq j_1 < j < j_2 \leq n_i$, $\beta(i,j_2;i,j) - \beta(i,j;i,j_1) \geq 0$. To do that, we claim
\begin{subequations}
\begin{equation}
\beta(i,j_1+1;i,j_1) \leq \beta(i,j;i,j_1+1), \quad \text{if } j_1 < j-1,
\label{eq:claimj1}
\end{equation}
\begin{equation}
\beta(i,j_2;i,j_2-1) \geq \beta(i,j_2-1;i,j), \quad \text{if } j_2 > j+1.
\label{eq:claimj2}
\end{equation}
\end{subequations}
These two claims can be proved by induction. Here we only show the proof for (\ref{eq:claimj1}). It is true that $\beta(i,j-1;i,j-2) \leq \beta(i,j;i,j-1)$ (the $j_1=j-2$ case for (\ref{eq:claimj1})) by the vertical butterfly constraint \hyperlink{c3}{C3}. Assume that (\ref{eq:claimj1}) holds for $j_1=l < j-2$, i.e.
\begin{equation*}
\begin{split}
c^i_j \geq c^i_{l+1} + (k^i_j-k^i_{l+1}) \frac{c^i_{l+1}-c^i_{l}}{k^i_{l+1}-k^i_{l}} & = c^i_{l+1} + \left[ (k^i_j-k^i_l)-(k^i_{l+1}-k^i_l) \right] \frac{c^i_{l+1}-c^i_{l}}{k^i_{l+1}-k^i_{l}} \\
& = c^i_l + (k^i_j - k^i_l) \beta(i,l+1;i,l).
\end{split}
\end{equation*}
This leads to $\beta(i,j;i,l) \geq \beta(i,l+1;i,l)$. Again by \hyperlink{c3}{C3} we have $\beta(i,l+1;i,l) \geq \beta(i,l;i,l-1)$. Hence, $\beta(i,j;i,l) \geq \beta(i,l;i,l-1)$, which is the stated inequality by setting $j_1=l-1$. Therefore, (\ref{eq:claimj1}) holds by induction in reverse order from $j_1=j-2$ to $j_1=0$. Thereafter, (\ref{eq:claimj1}) implies
\begin{equation*}
\begin{split}
-c^i_{j_1} & \leq -c^i_{j_1+1} + (k^i_{j_1+1} - k^i_{j_1}) \frac{c^i_j - c^i_{j_1+1}}{k^i_j - k^i_{j_1+1}} \\ 
& = -c^i_{j_1+1} + \left[ (k^i_{j_1+1}-k^i_j)-(k^i_{j_1}-k^i_j) \right] \frac{c^i_j - c^i_{j_1+1}}{k^i_j - k^i_{j_1+1}} \\
& = -c^i_{j} + (k^i_{j_1+1}-k^i_j) \beta(i,j;i,j_1+1),
\end{split}
\end{equation*}
which leads to $\beta(i,j;i,j_1) \leq \beta(i,j;i,j_1+1)$. Similarly we have $\beta(i,j;i,j_1) \leq \beta(i,j;i,j_1+1) \leq \cdots \leq \beta(i,j;i,j-1)$. In similar fashion we can prove (\ref{eq:claimj2}) by induction from $j_1=j+2$ to $j_1=n_i$, and deduce $\beta(i,j_2;i,j) \geq \beta(i,j_2-1;i,j) \geq \cdots \geq \beta(i,j+1;i,j)$. Therefore, with \hyperlink{c3}{C3}, we can conclude $\beta(i,j_2;i,j) \geq \beta(i,j;i,j_1)$.

Finally, we show that any vertical spread is bounded by 1, i.e. $0 \leq j_1 < j_2 \leq n_i$, $-\beta(i,j_2;i,j_1) \leq 1$. For the case $j_1 > 1$, given that any butterfly spread is non-negative, we have $-\beta(i,j_2;i,j_1) \leq -\beta(i,j_1;i,1) \leq -\beta(i,1;i,0) \leq 1$, where the last inequality holds due to the vertical spread constraint \hyperlink{c2}{C2}. If $j_1 = 1$, then $-\beta(i,j_2;i,j_1=1) \leq -\beta(i,1;i,0) \leq 1$. Otherwise $j_1=0$, applying (\ref{eq:claimj1}) by assigning $j_1=0$, $j=j_2$ yields $-\beta(i,j_2;i,j_1=0) \leq -\beta(i,1;i,0) \leq 1$.
\end{proof}

\begin{lemma}\label{lemma:arbitrage2}
If \hyperlink{c2}{C2}, \hyperlink{c4}{C4} and \hyperlink{c5}{C5} are satisfied, then any calendar spread or calendar vertical spread is non-negative.
\end{lemma}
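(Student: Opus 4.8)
The plan is to handle the two assertions separately, since the calendar-spread claim is essentially a restatement of a hypothesis while the calendar-vertical-spread claim requires genuine work. For the first: a calendar spread $\text{CS}^{i_1,i_2}_{j}=\text{S}^{i_1,i_2}_{j_1,j_2}$ (with $i_1<i_2$ and $k^{i_1}_{j_1}=k^{i_2}_{j_2}$) has value $c^{i_2}_{j_2}-c^{i_1}_{j_1}$, and this is precisely the quantity that constraint \hyperlink{c4}{C4} declares non-negative for every admissible index combination, so I would simply invoke \hyperlink{c4}{C4}.

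The substance lies in promoting the \emph{localised} constraint \hyperlink{c5}{C5} — which only controls $\text{CVS}^{i^*,i}_{j^*,j}$ when the longer-dated strike $k^i_j$ lies strictly between two consecutive shorter-dated strikes $k^{i^*}_{j^*-1},k^{i^*}_{j^*}$ — to \emph{all} calendar vertical spreads. First I would record that, for each fixed expiry, the prices $c^i_0\ge c^i_1\ge\cdots\ge c^i_{n_i}$ are non-increasing in the strike index; this follows from \hyperlink{c2}{C2} alone and is exactly the vertical-spread chain already used at the start of the proof of Lemma~\ref{lemma:arbitrage1}. Then, fixing a calendar vertical spread $\text{CVS}^{i_1,i_2}_{j_1,j_2}$, so that $T_{i_1}<T_{i_2}$ and $0=k^{i_1}_0\le k^{i_2}_{j_2}<k^{i_1}_{j_1}$, I would locate $k^{i_2}_{j_2}$ in the strike ladder of expiry $T_{i_1}$: there is an index $l\in\{0,\dots,j_1-1\}$ with $k^{i_1}_l\le k^{i_2}_{j_2}<k^{i_1}_{l+1}$.

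From here the argument is one bridging step. If $k^{i_2}_{j_2}=k^{i_1}_l$, the calendar spread at this common strike gives, via \hyperlink{c4}{C4}, that $c^{i_2}_{j_2}\ge c^{i_1}_l$. Otherwise $k^{i_1}_l<k^{i_2}_{j_2}<k^{i_1}_{l+1}$ with $l+1\le j_1\le n_{i_1}$, so (using $T_{i_2}>T_{i_1}$) the pair $(i_2,j_2)$ lies in the index set of \hyperlink{c5}{C5} taken with $i^*=i_1$ and $j^*=l+1$, which yields $\text{CVS}^{i_1,i_2}_{l+1,j_2}\ge 0$, i.e. $c^{i_2}_{j_2}\ge c^{i_1}_{l+1}$. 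In both cases $c^{i_2}_{j_2}\ge c^{i_1}_p$ for some index $p\le j_1$, and the vertical-spread monotonicity from \hyperlink{c2}{C2} gives $c^{i_1}_p\ge c^{i_1}_{j_1}$; chaining these shows $c^{i_2}_{j_2}\ge c^{i_1}_{j_1}$, which, since $k^{i_1}_{j_1}-k^{i_2}_{j_2}>0$, is exactly $\text{CVS}^{i_1,i_2}_{j_1,j_2}\ge 0$.

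I do not expect a genuine obstacle; the care needed is mostly bookkeeping — verifying that the augmented strike $k^{i_1}_0=0$ is covered by this case split, that $j_1\ge 1$ whenever a calendar vertical spread is defined (so the index $j^*=l+1$ stays in the range $[1,n_{i_1}]$ demanded by \hyperlink{c5}{C5}), and that the sign conventions in $\beta(\cdot)$ turn ``$\text{CVS}\ge 0$'' into the displayed price inequality. If there is a subtle point, it is recognising that the right decomposition of a ``far'' calendar vertical spread is: a single local cross-expiry comparison (\hyperlink{c5}{C5}, or \hyperlink{c4}{C4} at an exact matching strike) followed by a within-expiry vertical-spread comparison at the shorter maturity.
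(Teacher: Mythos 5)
Your proof is correct and follows essentially the same route as the paper's: dispatch the equal-strike case with \hyperlink{c4}{C4}, then for $k^{i_1}_{j_1} > k^{i_2}_{j_2}$ locate the longer-dated strike in the shorter expiry's strike ladder, apply the localised constraint \hyperlink{c5}{C5} (or \hyperlink{c4}{C4} at an exact match) to cross expiries, and finish with the \hyperlink{c2}{C2} monotonicity chain within the shorter expiry. If anything, you are slightly more careful than the paper, which brackets $k^{i_2}_{j_2}$ in a half-open interval and then invokes \hyperlink{c5}{C5} (whose index set requires strict inequalities) without explicitly noting that the left-endpoint case is exactly the \hyperlink{c4}{C4} case you treat separately.
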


\begin{proof}
We would like to prove that $\forall 0 \leq i_1 < i_2 \leq m$ and $\forall j_1 \in [0, n_{i_1}], ~j_2 \in [0, n_{i_2}]$ where $k^{i^1}_{j^1} \geq k^{i^2}_{j^2}$, we have $c^{i_1}_{j_1} \leq c^{i_2}_{j_2}$.

First consider the calendar spread case when $k^{i_1}_{j_1} = k^{i_2}_{j_2}$. The calendar spread constraint \hyperlink{c4}{C4} immediately leads to $c^{i_1}_{j_1} \leq c^{i_2}_{j_2}$.

Otherwise $k^{i_1}_{j_1} > k^{i_2}_{j_2}$, which implies that $j_1$ must be greater than 0. Given $i_1 \in [1,m]$, $j_1 \in [1, n_{i_1}]$, there must be $k^{i_2}_{j_2} \in [k^{i_1}_{j_1-p-1}, k^{i_1}_{j_1-p})$ for some $p \in [0, j_1-1]$. By the calendar vertical spread constraint \hyperlink{c5}{C5}, we have $c^{i_2}_{j_2} \geq c^{i_1}_{j_1-p}$. In addition, $c^{i_1}_{j_1-p} \geq c^{i_1}_{j_1}$ due to the vertical spread constraint \hyperlink{c2}{C2}. Hence, $c^{i_2}_{j_2} \geq c^{i_1}_{j_1}$.
\end{proof}

\begin{lemma}\label{lemma:arbitrage3}
If \hyperlink{c3}{C3} and \hyperlink{c61}{C6} are satisfied, then any calendar butterfly is non-negative.
\end{lemma}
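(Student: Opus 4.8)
The goal is to prove $\text{CB}^{i^*,i_1,i_2}_{j^*,j_1,j_2}\ge 0$, i.e.\ $\beta(i_2,j_2;i^*,j^*)\ge\beta(i^*,j^*;i_1,j_1)$, for every admissible configuration: $i^*\le i_1$, $i^*\le i_2$, $k^{i_1}_{j_1}<k^{i^*}_{j^*}<k^{i_2}_{j_2}$, $j^*\ge 1$, and $i^*,i_1,i_2$ not all equal, so at least one of the two flanking options sits at an expiry strictly longer than $T_{i^*}$. My plan is to bound the ``left slope'' $\beta(i^*,j^*;i_1,j_1)$ from above and the ``right slope'' $\beta(i_2,j_2;i^*,j^*)$ from below by slopes living entirely on the strike grid of the single expiry $T_{i^*}$ (namely the two adjacent-strike slopes at the pivot point $o:=(k^{i^*}_{j^*},c^{i^*}_{j^*})$), and then to bridge the two bounds using the vertical butterfly inequality at $o$.

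First I would record the only single-expiry facts needed: (a) by \hyperlink{c3}{C3} and the induction already carried out in the proof of Lemma~\ref{lemma:arbitrage1}, the consecutive chord slopes $\beta(i,l+1;i,l)$ are non-decreasing in $l$ at every expiry, hence $\beta(i,j_b;i,j_a)\le\beta(i,j_d;i,j_c)$ whenever $j_a<j_b\le j_c<j_d$; and (b) any secant slope, even across expiries, is a convex combination of the two sub-secant slopes produced by inserting an intermediate point, so it always lies between them. For the left estimate: if $P_1:=(k^{i_1}_{j_1},c^{i_1}_{j_1})$ already lies at expiry $T_{i^*}$, then $\beta(i^*,j^*;i^*,j_1)\le\beta(i^*,j^*;i^*,j^*-1)$ is immediate from (a); if $P_1$ lies at a strictly longer expiry, say in the strike gap $(k^{i^*}_{l-1},k^{i^*}_l)$ with $l\le j^*$, then $\text{CB}^{i^*,i_1,i^*}_{l,j_1,l+1}\ge 0$ is one of the butterflies listed in \hyperlink{c61}{C6.1}, which yields $\beta(i^*,l;i_1,j_1)\le\beta(i^*,l+1;i^*,l)\le\beta(i^*,j^*;i^*,l)$; by (b) the left slope lies between $\beta(i^*,l;i_1,j_1)$ and $\beta(i^*,j^*;i^*,l)$, so it is $\le\beta(i^*,j^*;i^*,l)\le\beta(i^*,j^*;i^*,j^*-1)$ -- except in the borderline case $l=j^*$ ($P_1$ already in the gap immediately left of $o$), where C6.1 only gives the weaker bound $\beta(i^*,j^*;i_1,j_1)\le\beta(i^*,j^*+1;i^*,j^*)$. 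The right estimate is obtained by the mirror-image argument from the symmetric families of \hyperlink{c61}{C6.1} (and from the dedicated ``last strike'' families when $j^*=n_{i^*}$ or a flanking strike exceeds $k^{i^*}_{n_{i^*}}$), giving $\beta(i_2,j_2;i^*,j^*)\ge\beta(i^*,j^*+1;i^*,j^*)$ in general and the sharper $\beta(i_2,j_2;i^*,j^*)\ge\beta(i^*,j^*;i^*,j^*-1)$ unless $P_2$ lies in the gap immediately right of $o$.

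Assembling the cases: in every configuration except ``both flanking options at strictly longer expiries and both in the strike gaps immediately adjacent to $o$'', at least one of the two one-sided bounds is the sharp one, so $\beta(i^*,j^*;i_1,j_1)\le\beta(i_2,j_2;i^*,j^*)$ closes either trivially (both bounds equal to the backward slope $\beta(i^*,j^*;i^*,j^*-1)$, or both equal to the forward slope $\beta(i^*,j^*+1;i^*,j^*)$) or through $\beta(i^*,j^*;i^*,j^*-1)\le\beta(i^*,j^*+1;i^*,j^*)$, which is exactly $\text{VB}^{i^*}_{j^*,j^*-1,j^*+1}\ge 0$ from \hyperlink{c3}{C3}; the one excluded configuration is verbatim the content of \hyperlink{c62}{C6.2}. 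The step I expect to be the real work is the longer-expiry one-sided estimates: the C6 constraints only guarantee non-negativity of butterflies whose off-grid flank lies in the gap adjacent to the center, so a flank several gaps away from $o$ must be walked in one strike at a time by pivoting through intervening expiry-$T_{i^*}$ strikes and re-applying (a) and (b), with every slope comparison pointing the right way, and one has to check that each pivot step genuinely matches one of the stated C6.1/C6.2 sub-families -- including the right-edge ones when $j^*=n_{i^*}$ or a flank lies beyond $k^{i^*}_{n_{i^*}}$ -- so that no constraint outside C1--C6 is ever used.
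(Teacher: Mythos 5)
Your proposal is correct and follows essentially the same route as the paper's proof: locate each flanking strike in the strike grid of the centre expiry, derive the one-sided slope bounds from \hyperlink{c61}{C6.1} together with the single-expiry convexity coming from \hyperlink{c3}{C3} (your facts (a) and (b) are exactly the paper's claims (\ref{eq:claimP}) and (\ref{eq:claimQ})), bridge the two bounds through the vertical butterfly at the pivot, and reserve \hyperlink{c62}{C6.2} for the one configuration where both flanks sit at longer expiries in the gaps immediately adjacent to the pivot. Your case bookkeeping (sharp versus weak one-sided bounds, and the edge families for $j^*=n_{i^*}$ or strikes beyond $k^{i^*}_{n_{i^*}}$) matches the paper's split on $p,q=0$ versus $p,q>0$.
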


\begin{proof}
We would like to prove that $\forall i, i_1, i_2 \in [1,m] ~\text{where } i\leq i_1, i\leq i_2$ and $\forall j \in [1, n_i], ~j_1 \in [0, n_{i_1}], ~j_2 \in [0, n_{i_2}] ~\text{where } k^{i^1}_{j^1} < k^{i}_{j} < k^{i^2}_{j^2}$, we have $\beta(i, j; i_1, j_1) \leq \beta(i_2, j_2; i, j)$.

Given $i \in [1,m]$, $j \in [1,n_i]$, it must be that $k^{i^1}_{j^1} \in [k^i_{j-p-1}, k^i_{j-p}]$ for some $p \in [0, j-1]$ and either $k^{i^2}_{j^2} \in [k^i_{j+q}, k^i_{j+q+1}]$ for some $q \in [0, n_i-j-1]$ or $k^{i^2}_{j^2} \in (k^i_{n_i}, \infty)$. See Figure \ref{fig:proofCBS}.

\begin{figure}[h]  
\centering 
\begin{subfigure}[b]{0.48\linewidth}
    \begin{tikzpicture}[scale=1][
   	thick,
    >=stealth']
  	
  	\coordinate (O) at (0,0);

 	\draw[->] (-0.1,0) -- (5.6,0) coordinate[label = {below:$k$}] (xmax);
  	\draw[->] (0,-0.1) -- (0,5) coordinate[label = {right:$c$}] (ymax);	
	
	\node (a) at (.2,4.5) {};
	\node (b) [circle, fill=black, minimum size=3pt, inner sep=0pt] at (.5,3.5) {};
	\node (c) [circle, fill=black, minimum size=3pt, inner sep=0pt] at (1,2.5) {};
	\node (d) [circle, fill=black, minimum size=3pt, inner sep=0pt] at (2,1.5) {};
	\node (de)[rotate=-25] at (2.5,1.25) {...};
	\node (e) [circle, fill=black, minimum size=3pt, inner sep=0pt] at (3,1) {};
	\node (f) [circle, fill=black, minimum size=3pt, inner sep=0pt] at (4,.5) {};
	\node (g) [circle, fill=black, minimum size=3pt, inner sep=0pt, label={above:$(k^{i}_{j}, c^{i}_{j})$}] at (5.1,.3) {};
	\node (h) at (5.5,.25) {};
	
	\draw[gray] (a)--(b);
	\draw[gray] (b)--(c);
	\draw[gray] (c)--(d);
	\draw[gray] (e)--(f);
	\draw[gray] (f)--(g);
	\draw[gray] (g)--(h);
	
	\node (l) [circle, fill=red!60, minimum size=3pt, inner sep=0pt, label={right:$(k^{i_1}_{j_1}, c^{i_1}_{j_1})$}] at (.6,3) {};
	
	\draw[red, densely dotted] (l)--(g);
	\draw[red, densely dotted] (l)--(c);
	\draw[black, densely dotted] (g)--(c);
	\draw[black, densely dotted] (g)--(d);
	\draw[black, densely dotted] (g)--(e);
	
	\draw[dashed] (c) -- (c |- O) node[label = {below:$k^{i}_{j-p}$}] {};
 	\draw[dashed] (d) -- (d |- O) node[label = {below:$k^{i}_{j-p+1}$}] {};
\end{tikzpicture}
    \caption{$k^{i^1}_{j^1} \in [k^i_{j-p-1}, k^i_{j-p}]$ for some $p \in [0, j-1]$} \label{fig:proofCBS1}  
  \end{subfigure}
  \begin{subfigure}[b]{0.48\linewidth}
	\begin{tikzpicture}[scale=1][
   	thick,
    >=stealth']
  	
  	\coordinate (O) at (0,0);

 	\draw[->] (-0.1,0) -- (5.6,0) coordinate[label = {below:$k$}] (xmax);
  	\draw[->] (0,-0.1) -- (0,5) coordinate[label = {right:$c$}] (ymax);	
	
	\node (a) at (.2,4.5) {};
	\node (b) [circle, fill=black, minimum size=3pt, inner sep=0pt, label={right:$(k^{i}_{j}, c^{i}_{j})$}] at (.5,3.5) {};
	\node (c) [circle, fill=black, minimum size=3pt, inner sep=0pt] at (1,2.5) {};
	\node (d) [circle, fill=black, minimum size=3pt, inner sep=0pt] at (2,1.5) {};
	\node (de)[rotate=-25] at (2.5,1.25) {...};
	\node (e) [circle, fill=black, minimum size=3pt, inner sep=0pt] at (3,1) {};
	\node (f) [circle, fill=black, minimum size=3pt, inner sep=0pt] at (4,.5) {};
	\node (g) [circle, fill=black, minimum size=3pt, inner sep=0pt] at (5.1,.3) {};
	\node (h) at (5.5,.25) {};
	
	\draw[gray] (a)--(b);
	\draw[gray] (b)--(c);
	\draw[gray] (c)--(d);
	\draw[gray] (e)--(f);
	\draw[gray] (f)--(g);
	\draw[gray] (g)--(h);
	
	\node (r) [circle, fill=blue!60, minimum size=3pt, inner sep=0pt, label={above:$(k^{i_2}_{j_2}, c^{i_2}_{j_2})$}] at (4.7,.3) {};
	
	\draw[blue, densely dotted] (r)--(b);
	\draw[blue, densely dotted] (r)--(f);
	\draw[black, densely dotted] (b)--(d);
	\draw[black, densely dotted] (b)--(e);
	\draw[black, densely dotted] (b)--(f);
	
	\draw[dashed] (e) -- (e |- O) node[label = {below:$k^{i}_{j+q-1}$}] {};
 	\draw[dashed] (f) -- (f |- O) node[label = {below:$k^{i}_{j+q}$}] {};
\end{tikzpicture}
	\caption{$k^{i^2}_{j^2} \in [k^i_{j+q}, k^i_{j+q+1}]$ for some $q \in [0, n_i-j-1]$} \label{fig:proofCBS2}
  \end{subfigure}
\caption{Locations of $k^{i_1}_{j_1}$ and $k^{i_2}_{j_2}$ relative to $k^i_j$.}
\label{fig:proofCBS}
\end{figure}
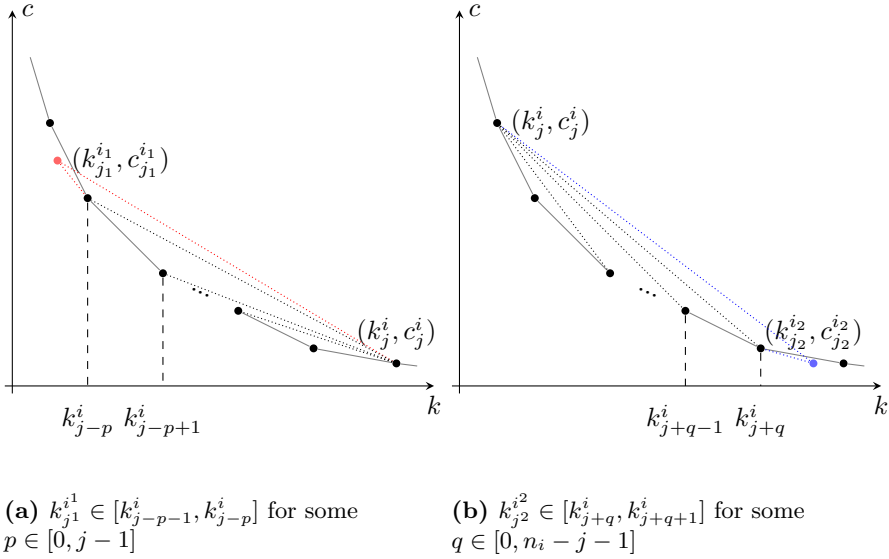

Let us consider the case when $k^{i^2}_{j^2} \leq k^i_{n_i}$ (which implies that $j<n_i$). If $p=q=0$, then by the calendar butterfly relative location constraints \hyperlink{c62}{C6.2} we conclude $\beta(i, j; i_1, j_1) \leq \beta(i_2, j_2; i, j)$. Otherwise, we claim that if $p>0$
\begin{subequations}
\begin{equation}
\beta(i,j;i_1,j_1) \leq \beta(i,j;i,j-p),
\label{eq:claimP1}
\end{equation}
\begin{equation}
\beta(i,j;i,j-p) \leq \beta(i,j;i,j-p+1) \leq \cdots \leq \beta(i,j;i,j-1);
\label{eq:claimP2}
\end{equation}
\label{eq:claimP}
\end{subequations}
and similarly if $q>0$
\begin{subequations}
\begin{equation}
\beta(i_2,j_2;i,j) \geq \beta(i,j+q;i,j),
\label{eq:claimQ1}
\end{equation}
\begin{equation}
\beta(i,j+q;i,j) \geq \beta(i,j+q-1;i,j) \geq \cdots \geq \beta(i,j+1;i,j).
\label{eq:claimQ2}
\end{equation}
\label{eq:claimQ}
\end{subequations}
We will show the proof for the four claims later. If $p>0$ and $q>0$, the four claims and the vertical butterfly constraint \hyperlink{c3}{C3} lead to the stated result. If $p>0$ but $q=0$, then (\ref{eq:claimP}) and the calendar butterfly absolute location convexity constraint \hyperlink{c61}{C6.1} lead to the stated result. If $p=0$ but $q>0$, then (\ref{eq:claimQ}) and \hyperlink{c61}{C6.1} lead to the stated result.

Next we would like to prove the claims (\ref{eq:claimP}) and (\ref{eq:claimQ}). First of all, (\ref{eq:claimP2}) and (\ref{eq:claimQ2}) hold because of the convexity of the set of points $\{(k^i_l, c^i_l)\}_{l\in[j-p,j+q]}$ resulted from the vertical butterfly constraint \hyperlink{c3}{C3}. The calendar butterfly absolute location convexity constraint \hyperlink{c61}{C6.1} results in $\beta(i,j-p;i_1,j_1) \leq \beta(i,j-p+1;i,j-p)$. In addition, the vertical butterfly constraint \hyperlink{c3}{C3} results in $\beta(i,j;i,j-p+1) \geq \beta(i,j-p+1;i,j-p)$, then
\begin{equation*}
\begin{split}
c^i_{j} & \geq c^i_{j-p+1} + (k^i_{j}-k^i_{j-p+1}) \frac{c^i_{j-p+1}-c^i_{j-p}}{k^i_{j-p+1}-k^i_{j-p}} \\
& = c^i_{j-p+1} + \left[ (k^i_{j}-k_{j-p}) - (k^i_{j-p+1}-k_{j-p}) \right] \frac{c^i_{j-p+1}-c^i_{j-p}}{k^i_{j-p+1}-k^i_{j-p}} \\
& = c^i_{j-p} + (k^i_{j}-k_{j-p}) \beta(i,j-p+1;i,j-p).
\end{split}
\end{equation*}
Hence $\beta(i,j,i,j-p) \geq \beta(i,j-p+1,i,j-p) \geq \beta(i,j-p,i_1,j_1)$. Then
\begin{equation*}
\begin{split}
c^{i_1}_{j_1} & \leq c^i_{j-p} + (k^{i_1}_{j_1} - k^i_{j-p}) \frac{c^i_j-c^i_{j-p}}{k^i_j-k^i_{j-p}} = c^i_{j-p} + \left[ (k^{i_1}_{j_1}-k^i_j )-( k^i_{j-p}-k^i_j) \right]  \frac{c^i_j-c^i_{j-p}}{k^i_j-k^i_{j-p}} \\
& = c^i_j + (k^{i_1}_{j_1}-k^i_j ) \beta(i,j;i,j-p),
\end{split}
\end{equation*}
which indicates that $\beta(i,j;i_1,j_1) \leq \beta(i,j;i,j-p)$, i.e. (\ref{eq:claimP1}).

The calendar butterfly spread absolute location convexity constraint \hyperlink{c61}{C6.1} results in $\beta(i_2,j_2;i,j+q) \geq \beta(i,j+q;i,j+q-1)$. In addition, the vertical butterfly constraint \hyperlink{c3}{C3} results in $\beta(i,j+q-1;i,j) \leq \beta(i,j+q;i,j+q-1)$, then
\begin{equation*}
\begin{split}
-c^i_j & \leq - c^i_{j+q-1} + (k^i_{j+q-1}-k^i_j) \frac{c^i_{j+q}-c^i_{j+q-1}}{k^i_{j+q}-k^i_{j+q-1}} \\
& = -c^i_{j+q-1} + \left[ (k^i_{j+q-1}-k^i_{j+q}) - (k^i_j-k^i_{j+q}) \right] \frac{c^i_{j+q}-c^i_{j+q-1}}{k^i_{j+q}-k^i_{j+q-1}} \\
& = -c^i_{j+q} + (k^i_{j+q}-k^i_j) \beta(i,j+q-1;i,j+q).
\end{split}
\end{equation*}
Hence $\beta(i,j+q;i,j) \leq \beta(i,j+q;i,j+q-1) \leq \beta(i_2,j_2;i,j+q)$. Then
\begin{equation*}
\begin{split}
c^{i_2}_{j_2} & \geq c^i_{j+q} + (k^{i_2}_{j_2} - k^i_{j+q}) \frac{c^i_j-c^i_{j+q}}{k^i_j-k^i_{j+q}} = c^i_{j+q} + \left[ (k^{i_2}_{j_2}-k^i_j) - (k^i_{j+q}-k^i_j) \right] \frac{c^i_j-c^i_{j+q}}{k^i_j-k^i_{j+q}} \\
& = c^i_j + (k^{i_2}_{j_2}-k^i_j) \beta(i,j;i,j+q),
\end{split}
\end{equation*}
which indicates that $\beta(i_2,j_2;i,j) \geq \beta(i,j+q;i,j)$, i.e. (\ref{eq:claimQ1}).

Now we consider the case when $k^{i_2}_{j_2} > k^i_{n_i}$. The same proof as above applies if we let $q=n_i-j$ and allow $j$ to take the value $n_i$.

\end{proof}

\section*{Acknowledgements}
This publication is based on work supported by the EPSRC Centre for Doctoral Training in Industrially Focused Mathematical Modelling (EP/L015803/1) in collaboration with CME Group. We thank Florian Huchede, Director of Quantitative Risk Management, and other colleagues at CME Group for providing valuable data access, suggestions from the business perspective, and continued support. 

Samuel Cohen and Christoph Reisinger acknowledge the support of the Oxford-Man Institute for Quantitative Finance, and Samuel Cohen also acknowledges the support of the Alan Turing Institute under the Engineering and Physical Sciences Research Council grant EP/N510129/1.

\bibliographystyle{abbrv}
\bibliography{reference}   

\end{document}